
\documentclass[letterpaper, 10 pt, conference]{ieeeconf}  

\IEEEoverridecommandlockouts                              

\overrideIEEEmargins                                      




\title{\LARGE \bf
Dynamic Defender-Attacker Blotto Game
}

\author{Daigo Shishika$^1$, Yue Guan$^2$, Michael Dorothy$^3$, and Vijay Kumar$^4$
\thanks{We gratefully acknowledge the support of ARL grant ARL DCIST CRA W911NF-17-2-0181. The views expressed in this paper are those of the authors and do not reflect the official policy or position of the United States Army, Department of Defense, or the United States Government.}
\thanks{$^1$Daigo Shishika is with the Department of Mechanical Engineering at George Mason University {\tt\small dshishik@gmu.edu}}
\thanks{$^2$Yue Guan is with the School of Aerospace Engineering at Georgia Institute of Technology {\tt\small yguan44@gatech.edu}}
 \thanks{$^3$Michael Dorothy is with CCDC Army Research Laboratory, Adelphi, MD 20783, USA {\tt\small michael.r.dorothy.civ@mail.mil}}
 \thanks{$^4$Vijay Kumar is with GRASP Laboratory at the University of Pennsylvania {\tt\small kumar@seas.upenn.edu}}
 \thanks{*The first two authors contributed equally as co-first authors.}
}

\usepackage{graphicx}
\usepackage{amsmath, bm}
\usepackage{color}
\usepackage{float}
\usepackage{mathtools}
\usepackage[ruled,vlined,linesnumbered]{algorithm2e}

\usepackage{graphics} 
\usepackage{epsfig} 
\usepackage{times} 
\usepackage{amsmath} 
\usepackage{amssymb}  
\usepackage{color}
\pdfminorversion=4
\usepackage[noend]{algpseudocode}
\usepackage{booktabs,lipsum}
\usepackage{accents}
\usepackage{cite}
\usepackage{amsthm}
\usepackage[dvipsnames]{xcolor}

\newcommand{\x}{\mathbf{x}}
\usepackage{dsfont}


\newcommand{\xreq}{\mathbf{x}^\req}
\newcommand{\xmagreq}{X^\req}
\newcommand\polyreq{\mc{P}^\req}

\newcommand{\vx}{\mathbf{v}}
\newcommand{\vy}{\mathbf{w}}

\newcommand{\y}{\mathbf{y}}

\newcommand{\mc}[1]{\mathcal{#1}}

\newcommand{\mf}[1]{\mathbf{#1}}

\newcommand{\nodeset}{\mc{V}}
\newcommand{\edgeset}{\mc{E}}
\newcommand{\graph}{\mc{G}}

\newcommand{\K}{\mathcal{K}}
\newcommand{\F}{\mathcal{F}}
\newcommand{\hatK}{\hat{K}}
\newcommand{\hatF}{\hat{F}}
\newcommand{\hatcalK}{\hat{\mathcal{K}}}
\newcommand{\hatcalF}{\hat{\mathcal{F}}}

\newcommand{\convexhull}[1]{\mathrm{Conv}\big(#1\big)}

\newtheorem{problem}{Problem}
\newtheorem{definition}{Definition}
\newtheorem{lemma}{Lemma}
\newtheorem{fact}{Fact}
\newtheorem{theorem}{Theorem}

\newtheorem{remark}{Remark}

\def\graph{\mc{G}}
\def\reals{\mathbb{R}}

\def\neighbor{\mc{N}}
\def\inneighbor{\mc{N}^-}
\def\outneighbor{\mc{N}^+}
\def\dout{d^+}
\def\din{d^-}
\def\doutmax{d^+_\mathrm{max}}
\def\doutmin{d^+_\mathrm{min}}

\def\RSet{\mathcal{R}}
\def\SSet{\mathcal{S}}

\def\deltax{\vec{\x}}
\def\deltay{\vec{\y}}

\newcommand{\defender}{the Defender}
\newcommand{\attacker}{the Attacker}

\def\region{\mathcal{R}}

\newcommand{\req}{\text{req}}

\usepackage{yaacro}
\begin{acgroupdef}[list=acronimos]
    \acdef{DAB}{Defender-Attacker Blotto}
    \acdef{dDAB}{dynamic Defender-Attacker Blotto}
    \acdef{dDARA}{dynamic Defender-Attacker Resource Allocation}
\end{acgroupdef}

\newcommand{\thegame}{\ac{dDAB} game}


\begin{document}

\maketitle
\thispagestyle{empty}
\pagestyle{empty}

\begin{abstract}
This work studies a dynamic, adversarial resource allocation problem in environments modeled as graphs. A blue team of defender robots are deployed in the environment to protect the nodes from a red team of attacker robots. We formulate the engagement as a discrete-time dynamic game, where the robots can move at most one hop in each time step. The game terminates with the attacker's win if any location has more attacker robots than defender robots at any time. The goal is to identify dynamic resource allocation strategies, as well as the conditions that determines the winner: graph structure, available resources, and initial conditions. We analyze the problem using reachable sets and show how the outdegree of the underlying graph directly influences the difficulty of the defending task. Furthermore, we provide algorithms that identify sufficiency of attacker's victory.
\end{abstract}



\section{Introduction}
Multi-robot task allocation problems have been studied extensively for various application areas with different levels of abstraction \cite{korsah2013comprehensive, khamis2015multi}.
Of particular interest to this paper are the settings where a large population of robots must move to a set of locations to perform their tasks~\cite{berman2009optimized,prorok2017impact}.
Berman et al.~\cite{berman2009optimized} designed distributed control laws to drive the population of robots to a desired distribution over {graph environments}.
The theory was later extended to accommodate heterogeneous robots and tasks with diverse needs~\cite{prorok2017impact}.
However, the theoretical analysis in these works focused on the steady-state performance, and they did not consider how the system reacts to changing conditions or the presence of adversaries.

With the presence of adversaries, resource allocation problems can be formulated as Colonel Blotto games
\cite{roberson2006colonel,powell2009sequential,chandan2020showing,kovenock2018optimal}.
In the most standard version~\cite{gross1950continuous}, two colonels allocate their resources to multiple locations.
Whoever allocated more resource wins that location, and each colonel aims to maximize the number of locations s/he wins.
Different variants of Colonel Blotto games have been studied, including asymmetric budget~\cite{roberson2006colonel}, asymmetric information~\cite{paarporn2019characterizing}, etc. 
However, most of the formulations consider static games assuming that the desired allocation is achieved instantly and thus ignoring the dynamics that are involved in the resource transportation.\footnote{Although there are works that consider dynamical extensions of Colonel Blotto games \cite{konrad2018budget,hajimirsaadeghi2017dynamic,klumpp2019dynamics}, they do not consider the transportation of the resources in the environment.}






In this work, we take inspirations from the above fields and study resource allocation problem with two main emphases: dynamic reallocation in \emph{time-varying} situation and the presence of \emph{adversarial} resources.
The innovation of our work is in the combination of Colonel Blotto game with the idea of population dynamics over graphs.
Instead of achieving the desired allocation instantly, we require the resources\footnote{We use the terms robots and resources interchangeably. The term ``player'' is reserved for the entity (the Defender or the Attacker) that determines the allocation of those robots / resources.} to \emph{traverse} through the environment.
Now, the resources from competing teams meet during the transition, and their interactions influence the performance of each team.

\begin{figure}[t]
    \centering
    \includegraphics[width=0.4\textwidth]{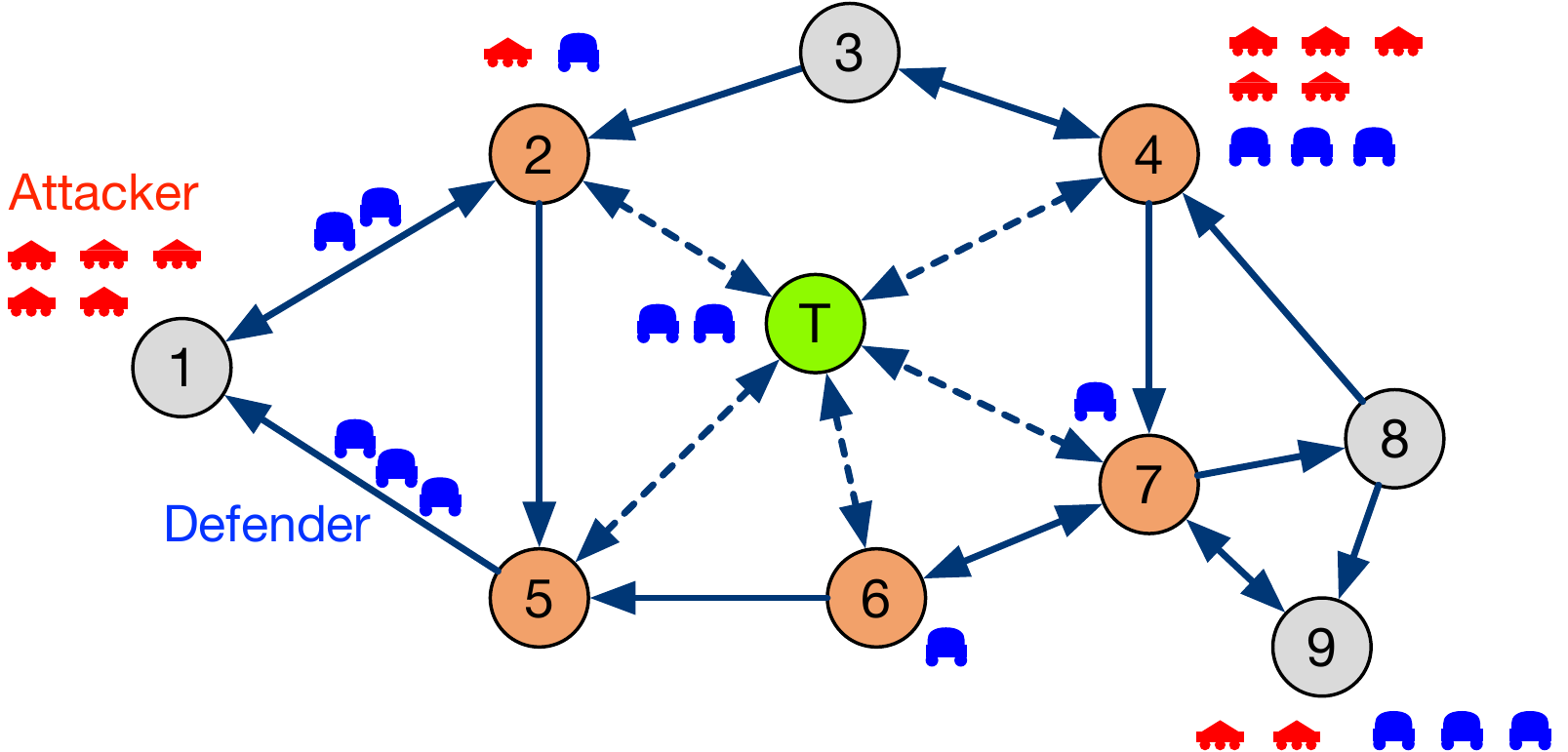}
    \vspace{-5pt}
    \caption{Illustration of the motivating perimeter-defense scenario. The node T (green) denotes the high-value target. The Defender must protect the \emph{entry nodes} (orange) by allocating more defender robots (blue) than the attacker robots (red).}
    \label{fig:illustration}
\end{figure}

As a first step in studying the 
dynamic and adversarial resource allocation problem,
this paper considers a simple version motivated by \emph{perimeter defense} scenario 
\cite{shishika2020ral} 
as depicted in Fig.~\ref{fig:illustration}.
Two players, the \defender{} and the \attacker{}, allocate their resources among the nodes of a directed graph.
The resources can only move along the directed edges, and only one hop is allowed at each time step.
Suppose there is a location with a high-value target that the \defender{} must protect, and that the \attacker{} only has access to that location through a set of ``entry nodes''  (see Fig.~\ref{fig:illustration}).
The Defender must ensure that all the entry nodes have more defender robots than the attacker robots \emph{at all time}.
On the other hand, \attacker{} wins if it ``breaches'' any one of the entry nodes by allocating more robots to it.
Figure~\ref{fig:illustration} presents a situation where node \#4 is breached by the Attacker.

We formulate the scenario as a Game of Kind,
i.e. the performance of a team is solely characterized by whether it wins or not.
In addition, by assuming sequential actions, we exploit the structures in the dynamics of the game and leverage the reachable set approach~\cite{sun2017multiple} to efficiently predict the evolution of the game. 
We will then establish sufficient winning conditions for the players.

The contributions of this paper are: (i) the formulation of a novel resource allocation problem that has high relevance to security applications; (ii) the development of an efficient analysis approach based on reachable sets; and (iii) the derivation of sufficient winning conditions for the players.
Beyond the results presented in this work, the proposed model has a potential for various extensions to study dynamic and adversarial engagement between robots with traversability constraints.


\section{Problem Formulation}
\label{sec:formulation}
The \thegame{} is played between two players: \defender{} and \attacker{}.
The environment is represented as a directed graph $\graph = (\nodeset,\edgeset)$, where the $N$ nodes represent locations, and the directed edges represent the players' traversability between those locations.
We assume that $\graph$ is strongly connected~\cite{west2001introduction,berman2009optimized}, i.e., every node is reachable from any other node.
%
For notational simplicity, we assume that the two players share the same graph, but the analysis of this paper easily extends to the case where two players have different graphs.

A graph adjacency matrix $A \in \reals^{N \times N}$ captures the traversability of the resources as follows:
\begin{equation}
    \left[A\right]_{ij} = \left\{\begin{array}{cl}
         1 \quad& \text{if directed edge $(j,i) \in \edgeset$},   \\
         0 \quad& \text{otherwise}.
    \end{array} \right. 
\end{equation}
We use $\dout_j = \sum_i [A]_{ij}$ to denote the out-degree of node $j$.

The total amount of resources for \defender{} and \attacker{} are denoted by $X\in \reals_{>0}$ and $Y\in \reals_{>0}$, respectively. 
We consider a discrete time problem with continuous resources.\footnote{Such an assumption on the state vector simplifies the analysis \cite{berman2009optimized,prorok2017impact}, however, we will later show that our algorithms accommodate states that take discrete values.}
The state column vector (resource vector), $\x_t$, represents the distribution of \defender's resources across the environment at time step $t$. At each time step $t$, the state vector {lies on a scaled simplex}:
\begin{eqnarray}
    [\x_t]_i &\geq& 0,\; \forall\;i \in \nodeset \\
    \x_t ^\top \mf{1} &=& X, \label{eqn:state-constraint-2}
\end{eqnarray}
where $\mf{1}$ is a vector of ones with an appropriate dimension. 
The state vector $\y_t$ for \attacker{} also satisfies the same conditions above with $X$ replaced by $Y$.
{We use $\Delta_X$ and $\Delta_Y$ to denote the state space of \defender{} and \attacker{}.}

The significant difference from the original Colonel Blotto game is that the \thegame{} is played over multiple time steps, and that the states evolve according to the following dynamics:
\begin{equation}
    \x_{t+1} = K_t \x_t\;\;\;\text{and}\;\;\; \y_{t+1} = F_t \y_t,
    \label{eq:dynamics}
\end{equation}
where $K_t$ and $F_t$ represent the \emph{transition matrices} for \defender{} and \attacker{} respectively. 
These matrices are left stochastic (column sum is unity) and can take nonzero values only at elements where the adjacency matrix has $[A]_{ij}=1$. 
These matrices represent the action / control executed by the players.
For example, an action $K_t$ of \defender{} is admissible at time $t$ if and only if it satisfies the following linear constraints:
\begin{alignat}{2}
    &\sum_i [K_t]_{ij} = 1 \qquad && \forall \; i \in \nodeset 
    \label{eqn:stochastic-matrix-constraint}\\
    &\quad [K_t]_{ij} \geq 0 \qquad && \forall \; i, j \in \nodeset 
    \label{eqn:non-negativity-constraint}\\
    &\quad [K_t]_{ij} = 0 \qquad && \text{if } A_{ij} = 0.
    \label{eqn:adjacency-constraint}
\end{alignat}
{We denote the admissible set for the $K_t$ matrices as $\mathcal{K}$, which}
depends only on the underlying graph $\graph$ and is time-invariant.
The $F_t$ matrix for \attacker{} also satisfies similar constraints.
We denote the admissible set for $F_t$ as $\mathcal{F}$.\footnote{Under the assumption that the two players have the same graph, we have $\mathcal{F} = \K$. 
For consistency, we still use the notations of $\K$ and $\F$ to denote the two action spaces.}

Similar to Colonel Blotto games~\cite{gross1950continuous,roberson2006colonel},
the engagement at each location is modeled solely based on the amount of resources.
More specifically, \defender{} successfully \emph{guards} a location by allocating at least as much resource as \attacker{} did, whereas \attacker{} \emph{breaches} a location by allocating more than what \defender{} did.
For the \thegame{}, \defender{} wants to prevent \attacker{} from breaching any location.\footnote{In relation to the perimeter-defense scenario described in the introduction, this is the case where every node is an entry node.} 
Therefore, we define the terminal condition corresponding to the Attacker's victory as
\begin{equation}
    \exists \;i \in \nodeset \text{ and } t \geq 0, \;\; \text{such that\;\;} [\y_t]_i > [\x_t]_i,
    \label{eq:terminal_condition}
\end{equation}
i.e., the game terminates with \attacker's victory if it breaches at least one location at some time step $t$. The Defender wins the game if it can prevent \attacker{} from winning indefinitely, or over some time horizon $T_f$.

As for the information structure, we assume that the players make decisions in sequence.
Specifically, \defender{} acts first then \attacker{} acts next: i.e., \attacker{} can select its action after observing how \defender{} allocated its resources.
Considering a realistic scenario where the two players make simultaneous actions, our problem is a worst-case scenario for \defender{}.
Importantly, our scenario accommodates state feedback strategies in contrast to systems with constant transition matrices \cite{berman2009optimized,prorok2017impact}.

An instance of \thegame{} is defined by the following parameters: (i)~the available resources $X$ and $Y$, and (ii)~the underlying graph $\graph$.
The initial condition of the game is the initial states of the two players: $\x_0$ and $\y_0$.

\begin{problem}
What are the game parameters and the initial conditions for which \defender{} (resp. \attacker{}) has a strategy to guarantee its win, and what are those strategies?
\end{problem}
We provide our analysis approach based on reachable sets presented in the next section.

\section{Geometry of Reachable Sets}
\label{sec:reachable-set}
Since the dynamics of the two players are symmetric, we focus on the analysis of \defender{}'s reachable sets and its action space $\K$.
There are two major disadvantages working directly with the action space $\K$: 
(i) the higher dimensionality than the state space, i.e. $|\edgeset| \gg |\nodeset|$,
and (ii) the nonuniqueness in the action for a given pair of states: $\x_t$ and $\x_{t+1}$.
To avoid these issues, we directly consider the possible states that \defender{} can reach in the next time step.
\begin{definition}
Given the current state $\x_t$, the reachable set $\RSet(\x_t)$ contains all states that \defender{} can reach at the next time step with an admissible action. 
Formally, the reachable set from $\x_t$ is given by
\begin{equation}
    \RSet(\x_t) = \{ \x \;|\; \exists K\in\K \text{ such that } \x =K \x_t \}.
\end{equation}
\end{definition}

Due to the bilinear dynamics in~\eqref{eq:dynamics}, the reachable set $\RSet(\x_t)$ can be regarded as the image of the action space $\K$ under a linear transformation defined by $\x_t$.
To better understand the properties of the reachable sets, we first examine the structure of the action space.

\subsection{Action Space as a Polytope}
Under the linear constraints in~\eqref{eqn:stochastic-matrix-constraint}--\eqref{eqn:adjacency-constraint}, the set of admissible actions, $\mathcal{K}$, is a bounded polytope in $|\edgeset|$-dimensional space.
We use the extreme points (vertices) of this polytope to characterize $\mathcal{K}$.

Given the admissible action space $\K$, we define the set of \textit{extreme actions} as
\begin{equation}\label{eqn:extreme-actions}
    \hat{\mathcal{K}} = \big\{K \in \mathcal{K} \text{ such that } [K]_{ij} \in \{0,1\} \big\}.
\end{equation}
In words, $\hatcalK$ contains all admissible actions $K$ whose entries only take value of either 0 or 1. 
It can be shown that the cardinality of $\hatcalK$ is given by $\big\vert \hatcalK \big\vert = \prod_{j \in \nodeset} \dout_j$, where $\dout_j$ is the out-degree of node $j$ in the graph $\graph$.
We use $\ell$ to index the extreme actions in $\hatcalK$, 
i.e. $\hatcalK = \{\hatK^{(\ell)}\}_{\ell=1}^{|\hatcalK|}$.
The following theorem reveals the connection between the extreme actions and the admissible action set.
\begin{theorem}\label{thm:extreme-action}
The extreme actions defined in~\eqref{eqn:extreme-actions} are the vertices of polytope $\mathcal{K}$. 
Formally,
\begin{equation}\label{eqn:convex-hull}
    \K = \convexhull{\hatcalK}.
\end{equation}
Consequently, for any admissible action $K \in \K$, 
there is a set of non-negative coefficients $\bm{\lambda}=\{\lambda^{(\ell)}\}_{\ell=1}^{|\hatcalK|}$ such that $\sum_{\ell=0}^{|\hatcalK|} \lambda^{(\ell)}=1$ and 
\begin{equation}\label{eqn:action-convex-comb}
    K = \sum_{\ell=0}^{|\hatcalK|} \lambda^{(\ell)} \hatK^{(\ell)}.
\end{equation}
\end{theorem}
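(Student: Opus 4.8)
The plan is to exploit the column-separable structure of the constraints \eqref{eqn:stochastic-matrix-constraint}--\eqref{eqn:adjacency-constraint}. First I would observe that these constraints never couple entries belonging to different columns: a matrix $K$ lies in $\K$ if and only if, for every node $j\in\nodeset$, its $j$-th column belongs to
\begin{equation}
    S_j = \Big\{ \k \in \R^N \;\Big|\; \k \geq 0,\; \mf{1}^\top \k = 1,\; [\k]_i = 0 \text{ whenever } [A]_{ij}=0 \Big\},
\end{equation}
which is a standard simplex of dimension $\dout_j - 1$ living in the coordinate subspace indexed by the out-neighbors of $j$ (nonempty because strong connectivity forces $\dout_j \geq 1$). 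Hence $\K$ is identified with the Cartesian product $\prod_{j\in\nodeset} S_j$.

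Second, I would invoke two elementary facts: (a) the vertices of the simplex $S_j$ are exactly its $\dout_j$ unit-vector corners, i.e.\ the $0$--$1$ vectors in $S_j$; and (b) the vertex set of a Cartesian product of polytopes is the Cartesian product of the vertex sets. Combining (a) and (b), the vertices of $\K$ are precisely the matrices whose every column is a $0$--$1$ unit vector supported on the out-neighbors of that column index, which is exactly the set $\hatcalK$ in \eqref{eqn:extreme-actions}; this also yields $|\hatcalK| = \prod_{j\in\nodeset}\dout_j$ as a byproduct. Since every polytope equals the convex hull of its vertex set, we obtain $\K = \convexhull{\hatcalK}$, and the convex-combination representation \eqref{eqn:action-convex-comb} is just the definition of the convex hull.

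For a more self-contained argument I would instead reason directly. To see $\hatcalK \subseteq \operatorname{vertices}(\K)$: if some $\hatK\in\hatcalK$ equalled $\tfrac12(K'+K'')$ with $K',K''\in\K$, then on every entry where $\hatK$ is $0$ both $K'$ and $K''$ are $0$ by nonnegativity, and on every entry where $\hatK$ is $1$ both are $1$ because the corresponding column sums to one; hence $K'=K''=\hatK$. Conversely, if $K\in\K$ has a fractional entry $[K]_{ij}\in(0,1)$, then since column $j$ sums to one there is a second fractional entry $[K]_{i'j}\in(0,1)$ in that column; adding $\pm\varepsilon$ to these two entries for small $\varepsilon>0$ produces two distinct admissible matrices whose midpoint is $K$, so $K$ is not a vertex. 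Thus $\operatorname{vertices}(\K)=\hatcalK$, and the Minkowski--Weyl theorem gives \eqref{eqn:convex-hull}.

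I expect the only real subtlety to be bookkeeping rather than depth: ensuring the out-neighbor supports are nonempty (handled by strong connectivity) and, in the direct argument, checking that a single fractional entry in a probability-vector column necessarily forces a second one, so that the $\pm\varepsilon$ perturbation stays on the simplex and preserves the adjacency-induced zeros.
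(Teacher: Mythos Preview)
Your argument is correct, and it takes a genuinely different route from the paper's own proof. The paper proceeds by double inclusion: the easy direction $\convexhull{\hatcalK}\subseteq\K$ follows because the linear constraints are preserved under convex combination, and for the reverse direction it writes down an explicit formula for the coefficients, namely $\lambda^{(\ell)} = \prod_{(j,i)\in\mathcal{I}^{(\ell)}}[K]_{ij}$ where $\mathcal{I}^{(\ell)}$ is the set of edges activated by $\hatK^{(\ell)}$, and asserts that these satisfy $\sum_\ell\lambda^{(\ell)}=1$ and \eqref{eqn:action-convex-comb}. Your approach instead explains \emph{why} such a decomposition must exist: the constraints decouple column by column, so $\K$ is a Cartesian product of simplices $S_j$, and the vertex set of a product is the product of the vertex sets. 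This structural viewpoint is arguably cleaner and, in fact, yields the paper's product formula immediately if one unpacks it (the coefficient of a product vertex is the product of the per-column simplex coefficients, which are just the entries $[K]_{ij}$). Conversely, the paper's version has the advantage of being fully constructive without appeal to Minkowski--Weyl. Your direct perturbation argument additionally verifies that every element of $\hatcalK$ is genuinely extreme, which the paper's convex-hull equality alone does not establish, so in that sense your proof more fully covers the theorem as stated.
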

\begin{proof}
We provide a proof by double inclusion.
The direction of $\convexhull{\hatcalK} \subseteq \K$ is easy to show, as the extreme actions are all admissible actions and the linear constraints in \eqref{eqn:stochastic-matrix-constraint}--\eqref{eqn:adjacency-constraint} hold under convex combinations.

To show that $\K \subseteq \convexhull{\hatcalK}$, we provide a formula of $\{\lambda^{(\ell)}\}_\ell$ in~\eqref{eqn:action-convex-comb} for an arbitrary $K\in \K$. 
We first define the active edge set $\mathcal{I}^{(\ell)}$ for the extreme action $\hatK^{(\ell)} \in \hatcalK$ as 
\begin{equation*}
    \mathcal{I}^{(\ell)} = \left\{(j,i) \Big\vert [\hatK^{(\ell)}]_{ij}=1\right\}.
\end{equation*}
Then, given any admissible action $K \in \K$, the  coefficients $\lambda^{(\ell)}$ corresponding to the extreme action $\hatK^{(\ell)}$ can be computed as
\begin{equation}
     \lambda^{(\ell)} = \prod_{(j,i)\in \mathcal{I}^{(\ell)}} [K]_{ij}.
\end{equation}
One can further verify that the above formula satisfies~\eqref{eqn:action-convex-comb} and $\sum_{\ell = 1}^{|\hatcalK|} \lambda^{(\ell)} = 1$. 
Consequently, any admissible action is in the convex hull of the extreme actions.
With the double inclusion, we have proved the relation in~\eqref{eqn:convex-hull}. 
\end{proof}

\begin{remark}
The extreme action set $\hatcalK$ depends only on the graph $\graph$, and it only needs to be constructed once.
\end{remark}


The extreme action set for the \attacker{} is denoted as $\hatcalF$ and is defined similarly. 
We use $\{\hatF^{(r)}\}_{r=1}^{|\hatcalF|}$ to index the elements within $\hatcalF$.

\subsection{Reachable Sets as Polytopes}
The reachable sets $\RSet(\x_t)$ are in fact also polytopes in $\Delta_X$ (i.e., $|\nodeset|$-dimensional state space),
and it can be obtained by transforming the action space $\K$ as follows.

\begin{figure}[t]
    \centering
    \includegraphics[width=0.48\textwidth]{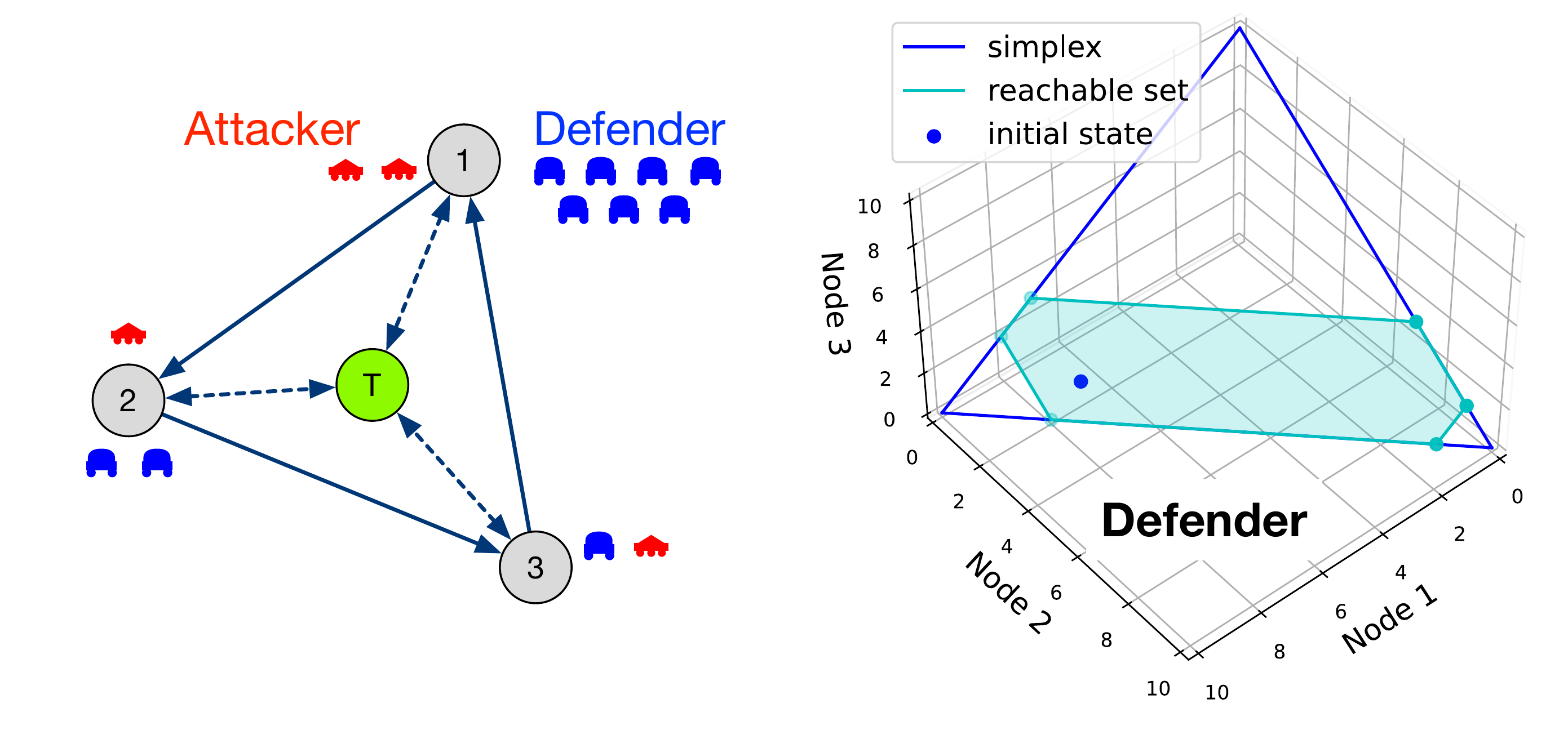}
    \caption{Illustration of reachable set. (a) Directed graph with three nodes, where each node $i$ has a self-loop, i.e. $(i,i) \in \edgeset$ for $i=1,2,3$ (omitted for clarity). (b) Defender's reachable set and its vertices.
    }
    \label{fig:demo-reachable-set}
\end{figure}

For any $\x \in \RSet(\x_t)$, by definition, there is an action $K_t \in \K$, such that $\x = K_t \x_t$.
Based on the characterization of $\K$ in~\eqref{eqn:action-convex-comb}, this $\x$ can be represented as the following convex combination for some $\bm{\lambda}$:
\begin{equation*}
    \x = K \x_t =\bigg(\sum_{\ell=0}^{|\hatcalK|} \lambda^{(\ell)} \hatK^{(\ell)} \bigg)\x_t = \sum_{\ell=0}^{|\hatcalK|} \lambda^{(\ell)} \left(\hatK^{(\ell)} \x_t\right). 
\end{equation*}
Define $\vx^{(\ell)}_{t+1} = \hatK^{(\ell)} \x_t$ to be the state achieved by propagating $\x_t$ with the extreme action $\hatK^{(\ell)}$.
Then, the convex hull of these vertices gives us the polytope $\RSet(\x_t) = \convexhull{\{\vx_{t+1}^{(\ell)}\}_{\ell=1}^{|\hatcalK|}}$, which describes the set of states that \defender{} at $\x_t$ can achieve at the next time step. 
Figure~\ref{fig:demo-reachable-set} presents an example of the reachable set for a three node graph.
Note that discrete resources (robots) are shown in Fig.~\ref{fig:demo-reachable-set}(a), however, the reachable set is given for a continuous state space.



\begin{remark}
Some of the projected points $\vx^{(\ell)}_{t+1}$ may be in the interior of $\RSet(\x_t)$.
One can remove these redundant points through convex hull algorithms such as Quickhull~\cite{barber1996quickhull}. 
In the sequel, we use $\vx^{(\ell)}_{t+1}$ to represent the vertices of $\RSet(\x_t)$. 
\end{remark}


With the same argument for \attacker{}, we can obtain the polytope, $\RSet(\y_t)$, and its vertices $\vy^{(r)}_{t+1} = \hatF^{(r)} \y_t$ for $r=1,2,...,|\hatF|$.
Note that by construction we have the following:  $\mf{1}^\top \vx^{(\ell)}_{t+1} = X$ and $\mf{1}^\top \vy^{(r)}_{t+1}  = Y$ for all these vertices.

Since any state in $\RSet(\x_t)$ can be reached at the next time step from $\x_t$, we will view this polytope as the action space for   \defender{} at state $\x_t$.
This definition of the action space resolves the two issues raised at the beginning of this section (dimensionality and nonuniqueness).

\section{Terminal States} \label{sec:terminal-states}
In this section, we will identify the set of states $(\x_t,\y_t)$ that will immediately lead to a termination in the next time step: i.e., for any Defender action $\x_{t+1}\in\RSet(\x_t)$,   \attacker{} has a strategy $\y_{t+1}\in\RSet(\y_t)$ to satisfy \eqref{eq:terminal_condition} and win at least one location at the next time step.


\subsection{Minimum Resource to Defend}
For \defender{} to defend every location, it is necessary and sufficient if the resource vector, $\x$, matches or outnumbers $\y$ at every node $i$: 
\begin{equation}
    [\x]_i\geq[\y]_i \quad \forall i \in \nodeset.
    \label{eq:vert_condition}
\end{equation}
%
Since \attacker{} takes its action after observing \defender{}'s action, the question becomes, whether there exists $\x_{t+1}\in \RSet(\x_t)$ such that \eqref{eq:vert_condition} is true for all $\y_{t+1} \in \RSet(\y_t)$.
We first rewrite \eqref{eq:vert_condition} into the following form:
\begin{equation*}
    [\x_{t+1}]_i \geq \max_{\y_{t+1} \in \RSet(\y_t)} [\y_{t+1}]_i \quad  \forall i \in \nodeset.
\end{equation*}
Since $ \RSet(\y_t)$ is a bounded polytope, for each node $i$ the optimization  $\max_{\y_{t+1} \in \RSet(\y_t)} [\y_{t+1}]_i$ can be viewed as a linear program, 
whose optimum is attained on one of the vertices of $\RSet(\y_t)$. 
Consequently, we define the minimum required resource at $t+1$ as $\xreq_{t+1}$, whose elements are given by
\begin{equation}\label{eqn:x_req}
    [\xreq_{t+1}]_i =\max_r \left[ \vy_{t+1}^{(r)} \right]_i,
\end{equation}
where $\big\{\vy^{(r)}_{t+1}\big\}_r = \big\{\hatF^{(r)} \y_t\big\}_r$ are the vertices of $\RSet(\y_t)$. 
\begin{remark}
Defender's minimum required resource at the \emph{next} time step, $\xreq_{t+1}=\xreq_{t+1}(\y_t)$, is a function of \attacker{}'s \emph{current} state, $\y_t$.
\end{remark}

By allocating at least $[\xreq_{t+1}]_i$ to node $i$,   \defender{} ensures that this node is defended against all feasible Attacker  actions at time step $t+1$.
If \defender{} allocates $[\x_{t+1}]_i<[\xreq_{t+1}]_i$, then after observing Defender's allocation, \attacker{} has a strategy $\y_{t+1}\in\RSet(\y_t)$ to win location $i$.
Thus $\xreq_{t+1}$ is the necessary and sufficient resources for \defender{} to defend all locations at time $t+1$, given the current $\y_t$.

Notice that $\xmagreq_{t+1}=\mf{1}^\top \xreq_{t+1}$ depends on $\graph$ and $\y_{t}$.
It is easy to see that \defender{} does not have a strategy to guarantee defense if $\xmagreq_{t+1}>X$.
On the other hand, if $\xmagreq_{t+1}\leq X$, \defender{} can guarantee defense by selecting any $\x_{t+1}$ that is inside the polytope $\polyreq(\y_t)$, which we call the \emph{required set}, defined as
\begin{equation}\label{eqn:poly_req}
    \polyreq(\y_t) \triangleq \{ \x_{t+1} \;|\; [\x_{t+1}]_i\geq [\xreq_{t+1}(\y_t)]_i,\; \forall\; i\in \nodeset \}.
\end{equation}
Such a selection is only possible if $ \RSet(\x_t)\cap\polyreq(\y_t)\neq \emptyset$.
Consequently, we define the {\emph{safe set}} for \defender{} as 
\begin{equation}
    \SSet(\x_t,\y_t) = \RSet(\x_t)\cap\polyreq(\y_t).
\end{equation}

Figure~\ref{fig:demo-required-resource} demonstrates the relationship between the reachable set, the required set and the safe set under the three-node example in Fig.~\ref{fig:demo-reachable-set}. 

\begin{figure}[t]
    \centering
    \includegraphics[width=0.48\textwidth]{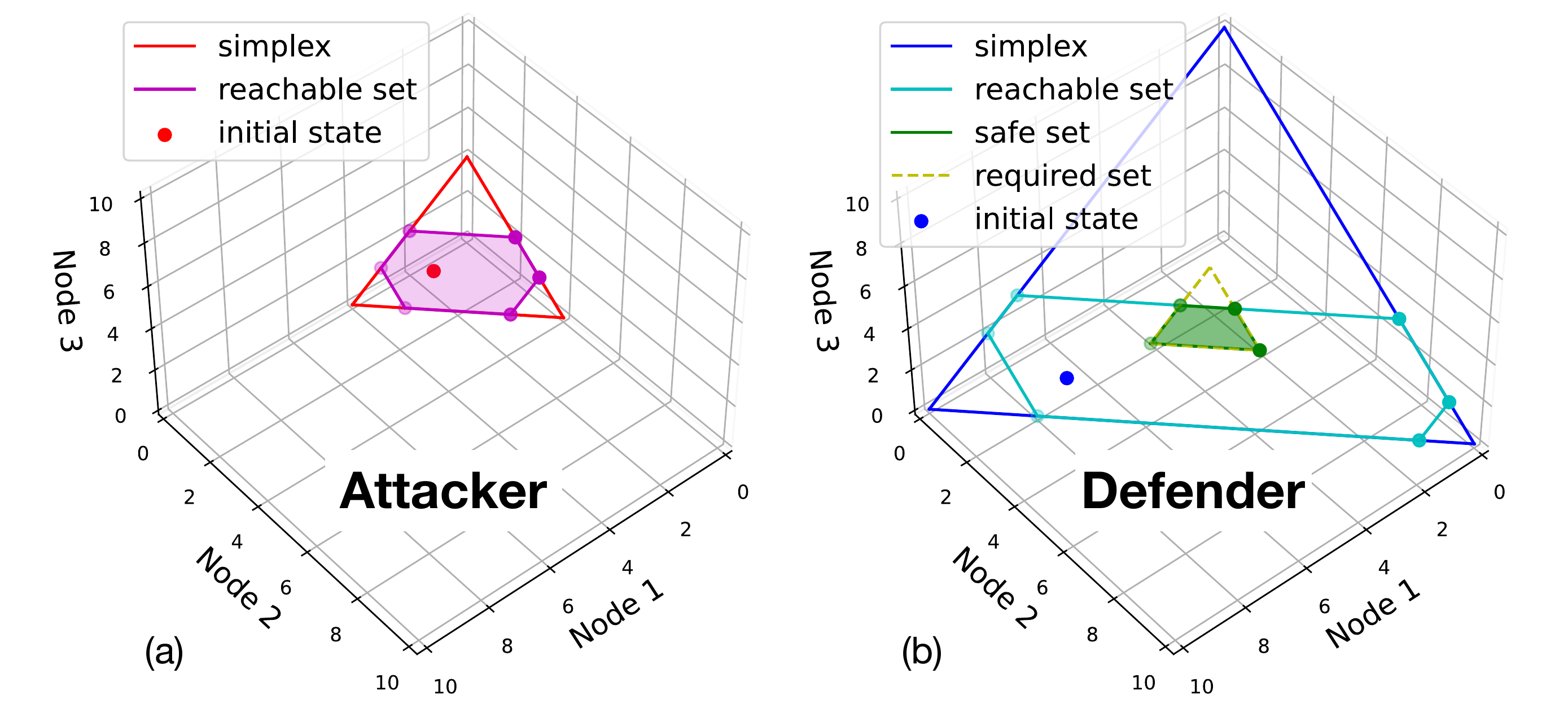}
    \caption{The required set $\polyreq(\y_t)$ resulted from \attacker{}'s reachable set and the safe set as the intersection of $\polyreq(\y_t)$ and $\RSet(\x_t)$.}
    \label{fig:demo-required-resource}
\end{figure}

\begin{lemma}\label{lem:xreq}
The game is one step away from termination, i.e., \attacker{} has a strategy to win at time $t+1$ if the safe set is empty at time $t$: $\SSet(\x_t,\y_t) = \emptyset$.
\end{lemma}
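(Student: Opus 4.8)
The plan is to prove the contrapositive of the Defender's one-step guarantee: I will show that if $\SSet(\x_t,\y_t)=\emptyset$, then for \emph{every} reachable Defender state $\x_{t+1}\in\RSet(\x_t)$, the Attacker — who moves second and observes $\x_{t+1}$ — can pick an admissible $\y_{t+1}\in\RSet(\y_t)$ that strictly outnumbers the Defender at some node, thereby satisfying the terminal condition \eqref{eq:terminal_condition}. Since the Attacker thus has a response winning at $t+1$ against any Defender move, this is exactly the claimed statement.

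First I would unpack the hypothesis. Because $\SSet(\x_t,\y_t)=\RSet(\x_t)\cap\polyreq(\y_t)=\emptyset$, no reachable Defender state lies in the required set. By the definition of $\polyreq(\y_t)$ in \eqref{eqn:poly_req}, this means that for each $\x_{t+1}\in\RSet(\x_t)$ there exists a node $i=i(\x_{t+1})\in\nodeset$ with $[\x_{t+1}]_i<[\xreq_{t+1}(\y_t)]_i$.

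Next, I would fix an arbitrary Defender action $\x_{t+1}\in\RSet(\x_t)$ and let $i$ be a node violated as above. Recall from \eqref{eqn:x_req} that $[\xreq_{t+1}(\y_t)]_i=\max_r[\vy_{t+1}^{(r)}]_i$, where $\vy_{t+1}^{(r)}=\hatF^{(r)}\y_t$ are the vertices of $\RSet(\y_t)$ and the maximum is attained since it is the optimum of a linear program over a bounded polytope. Letting $r^\star$ attain this maximum, the Attacker responds with $\y_{t+1}=\vy_{t+1}^{(r^\star)}=\hatF^{(r^\star)}\y_t$, which is admissible because $\hatF^{(r^\star)}\in\hatcalF\subseteq\F$. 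Then $[\y_{t+1}]_i=[\xreq_{t+1}(\y_t)]_i>[\x_{t+1}]_i$, so \eqref{eq:terminal_condition} holds at time $t+1$ and the Attacker breaches location $i$.

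Finally I would invoke the information structure: since the Attacker selects $\y_{t+1}$ after observing $\x_{t+1}$, and the construction above yields a winning $\y_{t+1}$ for \emph{every} $\x_{t+1}\in\RSet(\x_t)$, the Attacker has a (state-feedback) strategy guaranteeing a win at $t+1$. I do not expect a genuine obstacle here: the substantive work — describing $\RSet(\y_t)$ by its vertices and reducing $[\xreq_{t+1}]_i$ to a vertex maximum — has already been done, so the only point requiring care is invoking the sequential structure correctly, namely that the violated node $i$ (and hence the Attacker's vertex choice $r^\star$) is allowed to depend on the observed $\x_{t+1}$.
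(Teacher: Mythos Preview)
Your proof is correct and follows essentially the same argument as the paper: from $\SSet(\x_t,\y_t)=\emptyset$ you conclude that any $\x_{t+1}\in\RSet(\x_t)$ violates the required-set inequality at some node, and then you pick the vertex $\vy_{t+1}^{(r^\star)}$ of $\RSet(\y_t)$ attaining $[\xreq_{t+1}]_i$ at that node, exactly as the paper does. Your version is slightly more explicit about the sequential information structure (that the violated node $i$ and hence $r^\star$ may depend on the observed $\x_{t+1}$), but the substance is identical.
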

\begin{proof}
For any admissible Defender strategy $\x\in\RSet(\x_t)$, the condition $\SSet(\x_t,\y_t) = \emptyset$ states that $\x \notin \polyreq(\y_t)$.
Consequently, there exists a node $j$ such that $[\x]_j<[\xreq_{t+1}]_j$.
By construction, there is a vertex with index $r$ such that $[\vy_{t+1}^{(r)}]_j=[\xreq_{t+1}]_j$.
Therefore, \attacker{} can allocate according to $\y_{t+1}=\vy_{t+1}^{(r)} \in \RSet(\y_t)$ and breach node  $j$.
\end{proof}

\subsection{Degenerate Parameter Regime}
We studied the minimum required resource $\xreq$ using the vertices of the \attacker{}'s reachable set.
Here we consider a graph-theoretic interpretation.
The maximum possible resource to be allocated to node $i$ at time $t+1$ is the sum of resources at all neighboring locations with directed edges pointing to $i$ at time $t$.
We use $\inneighbor_i$ to denote the \emph{in-neighbors} of $i$, which may include $i$ itself if there is a \emph{self-loop}: $(i,i)\in\edgeset$.
An alternative definition of $\xreq$ is then
\begin{equation}
    [\xreq_{t+1}]_i = \sum_{j\in\inneighbor_i} [\y_t]_j.
\end{equation}

Recalling that $\dout_i$ denotes the \emph{outdegree} of node $i$,
the required total resource $\xmagreq$ satisfies
\begin{eqnarray}
    \xmagreq_{t+1} &=& \sum_i^N [\xreq_{t+1}]_i = \sum_i^N\sum_{j\in\inneighbor_i} [\y_t]_j \label{eq:xreq_count1}\\
    &=& \sum_j^N\sum_{i\in\outneighbor_j} [\y_t]_j =  \sum_j^N \dout_j [\y_t]_j. \label{eq:xreq_count2}
\end{eqnarray}
Note that from the first to the second line, we convert the counting based on inward edges to the one with outward edges.
Using this expression, we obtain the following bound:
\begin{equation}
    \doutmin  Y \leq \xmagreq \leq \doutmax Y,
    \label{eq:xreq_bound}
\end{equation}
where $\doutmin= \min_i \dout_i$ and $\doutmax= \max_i \dout_i$ are the minimum and maximum outdegree of $\graph$.
The time subscript is omitted since the bound is valid for all $t$.
The first (resp. second) equality holds when all the resource $Y$ is concentrated at $i$ with $\dout_i=\doutmin$ (resp. $\dout_i=\doutmax$).

\begin{remark}
Note that if $\dout_i=\dout$, $\forall i\in\nodeset$, then we have $\xmagreq = \dout Y$.
This means that $\xmagreq$ is fixed for any current Attacker state $\y_t$.
\end{remark}

The following result provides a parameter regime where the game trivially ends with \attacker{}'s win.

\begin{theorem}[Degenerate Attacker Winning Game]
Suppose there exists a node with self-loop $(i,i) \in \edgeset$,
and the total resources satisfy 
\begin{equation}
    X< \doutmax Y,
\end{equation}
then the Attacker can win the game from any initial state.
\end{theorem}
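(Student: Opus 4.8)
\noindent\emph{Proof proposal.}
The plan is to give the Attacker an open-loop schedule that, irrespective of the Defender's play, drives the Attacker to the state $\y^\star$ with $[\y^\star]_{i^\star}=Y$ and $[\y^\star]_j=0$ for $j\neq i^\star$, where $i^\star$ is a fixed node of maximum outdegree $\dout_{i^\star}=\doutmax$; then invoke Lemma~\ref{lem:xreq}. At $\y^\star$ we have $\xmagreq=\sum_j\dout_j[\y^\star]_j=\doutmax Y>X$ by~\eqref{eq:xreq_count2} and the standing assumption. Since every point of $\RSet(\x_t)$ has total mass exactly $X$ whereas every point of $\polyreq(\y^\star)$ has total mass at least $\xmagreq>X$ (sum the inequalities defining~\eqref{eqn:poly_req}), we get $\SSet(\x_t,\y^\star)=\RSet(\x_t)\cap\polyreq(\y^\star)=\emptyset$ for \emph{any} $\x_t$. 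Lemma~\ref{lem:xreq} then says the Attacker breaches some node at the following step. So it suffices to show the Attacker can reach $\y^\star$ in finitely many steps from an arbitrary $\y_0$, and this is where the self-loop hypothesis enters.

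Let $i_0$ be a node with $(i_0,i_0)\in\edgeset$. I would reach $\y^\star$ in two phases. In Phase~1 (gather at $i_0$), build a shortest-path in-tree rooted at $i_0$: for each $v\neq i_0$ choose an out-edge $(v,\mathrm{next}(v))\in\edgeset$ with $\mathrm{dist}(\mathrm{next}(v),i_0)=\mathrm{dist}(v,i_0)-1$, which exists because $\graph$ is strongly connected, and set $\mathrm{next}(i_0)=i_0$ using the self-loop. The deterministic transition matrix $F$ with $[F]_{\mathrm{next}(v),v}=1$ satisfies~\eqref{eqn:stochastic-matrix-constraint}--\eqref{eqn:adjacency-constraint}, and iterating it $D:=\max_v\mathrm{dist}(v,i_0)$ times carries all resource to $i_0$ and keeps it there, since $i_0$ is a fixed point of $\mathrm{next}$; hence $\y_D$ has all its mass at $i_0$. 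In Phase~2 (march to $i^\star$), take a shortest directed path $i_0=u_0\to u_1\to\cdots\to u_L=i^\star$ and, at step $D+k$, apply any admissible transition matrix whose $u_{k-1}$-th column is the indicator of $u_k$ (the other columns carry no mass and may be filled arbitrarily subject to admissibility, which is possible since strong connectivity gives every node an out-neighbor). The single concentrated blob then walks $u_0\to\cdots\to u_L$, so $\y_{D+L}=\y^\star$. Crucially, no resource has to wait during Phase~2, which is why a self-loop is only needed at the one node $i_0$.

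Setting $t^\star:=D+L$, the Attacker following this schedule (which ignores the Defender entirely) is at $\y_{t^\star}=\y^\star$, so $\SSet(\x_{t^\star},\y_{t^\star})=\emptyset$ and, by Lemma~\ref{lem:xreq}, the Attacker breaches a node at $t^\star+1$. If the game has instead ended before $t^\star+1$, that can only have happened through the terminal condition~\eqref{eq:terminal_condition}, i.e. the Attacker has already breached a node. In either case the Attacker wins from any initial state.

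I expect the main obstacle to be making Phase~1 rigorous: an induction on $\mathrm{dist}(v,i_0)$ showing that iterating the $\mathrm{next}$-map collapses the whole distribution onto $i_0$ within $D$ steps (the self-loop holding resources that arrive early), together with the routine but necessary check that the transition matrices used in both phases lie in $\F$. The reduction to Lemma~\ref{lem:xreq} through the total-mass mismatch, and the remark that the Attacker's schedule is open-loop and hence robust to the Defender's choices, are then straightforward.
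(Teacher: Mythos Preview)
Your proposal is correct and follows exactly the paper's approach: concentrate all Attacker resource at the self-loop node, march the concentrated mass to a node of maximum outdegree, and then observe that $\xmagreq=\doutmax Y>X$ forces the safe set to be empty so Lemma~\ref{lem:xreq} applies. The paper's own proof is a two-line sketch of precisely these two phases; your version simply supplies the explicit constructions (the in-tree with the self-loop as sink, the path walk, admissibility checks, and the early-termination caveat) that the paper leaves implicit.
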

\begin{proof}
Due to the existence of a self-loop $(i,i)\in \edgeset$,   \attacker{} can concentrate all its resource to the location $i$ in finite time. 
Furthermore, under the assumption of strongly connected graph, \attacker{} can then move its concentrated resource from node $i$ to node $i^*$ with $\dout_{i^*}=\doutmax$ in finite time.\footnote{Both procedures take at most $T_D$ time steps, where $T_D$ is the diameter of the graph.}
Once this configuration is achieved, we have $\xmagreq=\doutmax Y>X$. Therefore,  \defender{} does not have a strategy to defend every node. That is, \attacker{} has a strategy to win.
\end{proof}




\section{Strategies and Winning Conditions} \label{sec:winning-conditions}
From the results of the previous section, we will focus our attention to the case where the total resources satisfy
\begin{equation}
    \doutmax Y \leq X .
    \label{eq:population_nontrivial_game}
\end{equation}
\subsection{Defender Winning Scenario}
This section identifies conditions in which \defender{} has a strategy to win the game by protecting all the locations indefinitely.
In general the following two conditions guarantee that \defender{} can win the game.
\begin{itemize}\setlength{\itemindent}{1.5em}
    \item[(\textbf{C1a})] Initial condition:
    \begin{equation}
        \SSet(\x_0,\y_0)\neq \emptyset.
    \end{equation}
    \item[(\textbf{C1b})] Inductive condition: \\
    For any $\x_t$ and $\y_t$ for which 
    $\SSet(\x_t,\y_t)\neq \emptyset$,  
    \begin{eqnarray}
        &\exists\; \x_{t+1}\in\RSet(\x_t)\text{\;such that\;}\\ 
        &\SSet(\x_{t+1},\y_{t+1})\neq \emptyset,\; \forall\;\y_{t+1}\in\RSet(\y_{t+1}).
    \end{eqnarray}
\end{itemize}
The condition (C1a) ensures that \defender{} can survive the first time step. 
The inductive condition (C1b) states that if \defender{} is able to survive the current time step, then it can also survive the next time step.

\begin{lemma}\label{lem:guarding_condition}
The necessary and sufficient condition for guarding, $\x_t\in \polyreq(\y_{t-1})$, is equivalent to 
\begin{equation}\label{eq:guarding_condition}
    [\x_t]_i \geq \sum_{j\in\neighbor_i^-}[\y_{t-1}]_j,\;\forall\; i\in\nodeset.
\end{equation}
\end{lemma}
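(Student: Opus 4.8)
The plan is to unfold the two definitions involved and reduce the claim to a single pointwise identity for the minimum required resource, which is then proved by a two–sided inequality. By the definition of the required set in~\eqref{eqn:poly_req}, the membership $\x_t\in\polyreq(\y_{t-1})$ holds if and only if $[\x_t]_i\geq[\xreq_t(\y_{t-1})]_i$ for every $i\in\nodeset$, where by~\eqref{eqn:x_req} we have $[\xreq_t(\y_{t-1})]_i=\max_r\big[\hatF^{(r)}\y_{t-1}\big]_i$ with $\{\hatF^{(r)}\}_r$ enumerating the extreme Attacker actions in $\hatcalF$. Hence it suffices to establish, for each node $i$, the identity
\begin{equation*}
    \max_r\big[\hatF^{(r)}\y_{t-1}\big]_i=\sum_{j\in\neighbor_i^-}[\y_{t-1}]_j .
\end{equation*}
Observe that this is a per–coordinate statement, so it is harmless that the maximizing extreme action may differ from node to node; $\polyreq$ constrains each coordinate independently.

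For the inequality ``$\leq$'', I would fix any $\hatF^{(r)}\in\hatcalF$ and write out the matrix–vector product, $\big[\hatF^{(r)}\y_{t-1}\big]_i=\sum_j[\hatF^{(r)}]_{ij}[\y_{t-1}]_j$. Each entry $[\hatF^{(r)}]_{ij}$ lies in $\{0,1\}$ by~\eqref{eqn:extreme-actions}, and is forced to be $0$ whenever $A_{ij}=0$, i.e.\ whenever $j\notin\neighbor_i^-$. Since $[\y_{t-1}]_j\geq0$, this gives $\big[\hatF^{(r)}\y_{t-1}\big]_i\leq\sum_{j\in\neighbor_i^-}[\y_{t-1}]_j$, and taking the maximum over $r$ preserves the bound.

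For the reverse inequality ``$\geq$'', I would exhibit a single extreme action $\hatF^{\star}$ that routes all in–neighbor resources into $i$: for each column $j\in\neighbor_i^-$, place the unique $1$ of that column in row $i$ (admissible because $(j,i)\in\edgeset$); for each remaining column $j\notin\neighbor_i^-$, place its $1$ in any row $i'$ with $(j,i')\in\edgeset$, which exists because $\graph$ is strongly connected, so $\dout_j\geq1$. The resulting matrix is left–stochastic, $\{0,1\}$–valued, and respects the adjacency constraint, hence $\hatF^{\star}\in\hatcalF$, and by construction $\big[\hatF^{\star}\y_{t-1}\big]_i=\sum_{j\in\neighbor_i^-}[\y_{t-1}]_j$. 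Combining the two bounds yields the identity $[\xreq_t(\y_{t-1})]_i=\sum_{j\in\neighbor_i^-}[\y_{t-1}]_j$; substituting this into the membership condition proves the lemma. (Equivalently, one may simply invoke the alternative expression for $\xreq$ already recorded above.)

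The only step needing genuine care is the ``$\geq$'' construction: one must confirm that a legal extreme action can be built that simultaneously sends every in–neighbor of $i$ to $i$ without violating the column–stochastic and adjacency structure, and it is exactly the strong–connectivity assumption (which guarantees $\dout_j\geq1$ for every $j$, so each remaining column has at least one admissible slot for its $1$) that makes this possible. Everything else is routine bookkeeping with the definitions of $\polyreq$ and $\xreq$.
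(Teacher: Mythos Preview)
Your proof is correct and follows essentially the same approach as the paper's: both arguments reduce the claim to the identity $[\xreq_t(\y_{t-1})]_i=\sum_{j\in\neighbor_i^-}[\y_{t-1}]_j$, justified by the observation that the Attacker can place at node $i$ at most the total resource held by $\neighbor_i^-$, with equality when all in-neighbors route everything to $i$. The paper states this in one sentence as a graph-theoretic fact, while you spell it out via the extreme-action description of $\hatcalF$; your version is more detailed but not a different route.
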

\begin{proof}
The maximum resource that \attacker{} can allocate to node $i$ at time $t$ is the right-hand side of \eqref{eq:guarding_condition}: i.e.,
\begin{equation}
    [\y_t]_i \leq  \sum_{j\in\neighbor_i^-}[\y_{t-1}]_j,\;\forall\; i\in\nodeset.
\end{equation}
The equality holds when the neighboring nodes $\neighbor_i^-$ send all their resources to $i$.
The statement of the lemma trivially holds from this result.
\end{proof}

The following theorem applies the conditions (C1a) and (C1b) to a specific class of graphs.

\begin{theorem}[Cycle Graph]\label{thm:cycle_graph}
Suppose $\graph$ is a directed cycle graph with self-loops ($\dout_i=\din_i=2$, and $(i,i)\in\edgeset$, $\forall\;i\in \nodeset$), and the resources satisfy
    $X \geq 2Y,$
then \defender{} has a strategy to win the game iff 
$\SSet(\x_0,\y_0)\neq\emptyset$. 
\end{theorem}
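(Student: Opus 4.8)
The plan is to verify the two sufficient conditions (C1a) and (C1b). The initial condition (C1a) is exactly the hypothesis $\SSet(\x_0,\y_0)\neq\emptyset$, so only the inductive step (C1b) needs work, together with an argument that $X\ge 2Y$ is enough to make the induction go through. First I would exploit the simple structure of the cycle graph: index the nodes $1,\dots,N$ around the cycle, so that the in-neighbors of node $i$ are $\{i-1,i\}$ (indices mod $N$) and likewise the out-neighbors are $\{i,i+1\}$. By Lemma~\ref{lem:guarding_condition}, $\SSet(\x_t,\y_t)\neq\emptyset$ means there exists $\x_{t+1}\in\RSet(\x_t)$ with $[\x_{t+1}]_i\ge [\y_t]_{i-1}+[\y_t]_i$ for all $i$; and the required total is $\xmagreq_{t+1}=\sum_i([\y_t]_{i-1}+[\y_t]_i)=2Y$ by the counting identity \eqref{eq:xreq_count2} with every $\dout_j=2$, which is $\le X$ by hypothesis. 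So the required set $\polyreq(\y_t)$ is always nonempty; the real question is whether $\RSet(\x_t)$ can reach into it, and — crucially for the induction — whether it can do so while leaving $\x_{t+1}$ positioned to repeat the feat at time $t+2$ against \emph{every} $\y_{t+1}\in\RSet(\y_t)$.

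The key step is to produce an explicit Defender strategy and show it is "self-sustaining." Here is the construction I would use. Suppose at time $t$ we are guarding, i.e. $[\x_t]_i\ge[\xreq_t]_i=[\y_{t-1}]_{i-1}+[\y_{t-1}]_i$. Define the \emph{target} allocation $\x_{t+1}^\star$ by $[\x_{t+1}^\star]_i = [\y_t]_{i-1}+[\y_t]_i + s_i$, where $s_i\ge 0$ is a slack that distributes the surplus $X-2Y\ge 0$; a natural symmetric choice is $s_i=(X-2Y)/N$, but any nonnegative distribution summing to $X-2Y$ will do, and part of the argument is to choose it wisely. I must check that $\x_{t+1}^\star\in\RSet(\x_t)$, i.e. that $\x_{t+1}^\star$ is reachable from $\x_t$ by a stochastic transition respecting the cycle adjacency. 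On a cycle with self-loops this is a transportation feasibility question: mass at node $i$ at time $t$ can be split between staying at $i$ and moving to $i+1$. Reachability of $\x_{t+1}^\star$ from $\x_t$ holds iff the partial-sum (majorization-type) conditions $\sum_{k\le i}[\x_{t+1}^\star]_{\sigma(k)} \le \sum_{k\le i}[\x_t]_{\sigma(k)}$ hold for the appropriate cyclic ordering — equivalently, the "flow" $f_i := \sum_{k=1}^{i}([\x_t]_k-[\x_{t+1}^\star]_k)$ across edge $(i,i+1)$ can be made nonnegative for all $i$ by choosing the (free) circulation constant. Because both vectors sum to the same total, such a constant exists iff the needed net transport around the cycle is consistent; I would show the guarding hypothesis $[\x_t]_i\ge[\y_{t-1}]_{i-1}+[\y_{t-1}]_i$ combined with $[\x_{t+1}^\star]_i=[\y_t]_{i-1}+[\y_t]_i+s_i$ and the fact that $\y_t\in\RSet(\y_{t-1})$ (so $[\y_t]_i\le[\y_{t-1}]_{i-1}+[\y_{t-1}]_i$) gives exactly the slack needed to route the mass. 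This is the heart of the proof.

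The main obstacle I anticipate is precisely this routing/feasibility argument: showing that $\RSet(\x_t)\cap\polyreq(\y_t)\neq\emptyset$ is not automatic from $\xmagreq_{t+1}\le X$, because $\RSet(\x_t)$ is constrained — mass can move at most one hop. I would handle it by the cyclic-flow bookkeeping above, turning the question into the existence of a single scalar (the circulation around the cycle) making all edge flows nonnegative, and then bounding the required flows using the two inequalities $\x_t\ge\xreq_t$ and $\y_t\le$ (one-hop spread of $\y_{t-1}$). A secondary subtlety is the quantifier in (C1b): I need $\SSet(\x_{t+1},\y_{t+1})\neq\emptyset$ for \emph{all} $\y_{t+1}\in\RSet(\y_t)$, i.e. $\x_{t+1}$ must be reachable-into-$\polyreq$ no matter what the Attacker does next. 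Since for every such $\y_{t+1}$ we have $[\y_{t+1}]_i\le[\y_t]_{i-1}+[\y_t]_i=[\x_{t+1}^\star]_i-s_i\le[\x_{t+1}^\star]_i$, the chosen $\x_{t+1}^\star$ already dominates the worst-case next-step Attacker requirement node-by-node, so $\x_{t+1}^\star$ itself lies in $\polyreq(\y_{t+1})$ and trivially in $\RSet(\x_{t+1})$ (via the identity transition); hence $\SSet(\x_{t+1},\y_{t+1})\neq\emptyset$. Wrapping up: (C1a) is the hypothesis, (C1b) follows from the construction, so by induction \defender{} guards every node for all $t$, and combined with Lemma~\ref{lem:xreq} (non-emptiness of the safe set is exactly non-termination) this proves the "if" direction; the "only if" direction is immediate from Lemma~\ref{lem:xreq}, since $\SSet(\x_0,\y_0)=\emptyset$ lets \attacker{} win at $t=1$.
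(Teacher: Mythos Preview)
Your plan correctly identifies (C1a) and (C1b) as the backbone, and your target allocation $\x_{t+1}^\star$ with $[\x_{t+1}^\star]_i=[\y_t]_{i-1}+[\y_t]_i+s_i$ coincides with the paper's choice of any $\x_{t+1}\in\SSet(\x_t,\y_t)$. However, the closing step of your induction is wrong. You claim that because $[\y_{t+1}]_i\le[\y_t]_{i-1}+[\y_t]_i\le[\x_{t+1}^\star]_i$, the state $\x_{t+1}^\star$ itself lies in $\polyreq(\y_{t+1})$, so the identity transition witnesses $\SSet(\x_{t+1},\y_{t+1})\neq\emptyset$. But membership in $\polyreq(\y_{t+1})$ requires $[\x_{t+1}^\star]_i\ge[\y_{t+1}]_{i-1}+[\y_{t+1}]_i$, not merely $[\x_{t+1}^\star]_i\ge[\y_{t+1}]_i$. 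These are not the same: take $N=3$, $X=2Y$, $\y_t=(Y,0,0)$, so $\x_{t+1}^\star=(Y,Y,0)$; if the Attacker shifts everything forward to $\y_{t+1}=(0,Y,0)$, then $\xreq_{t+2}=(0,Y,Y)$, and $\x_{t+1}^\star$ fails at node~$3$. The identity transition does \emph{not} keep the Defender safe; the Defender must move in response to the Attacker.

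This is exactly the point where the paper supplies the missing idea. After observing the Attacker's flow $[\deltay_t]_i$ (the amount sent from $i$ to $i+1$), the Defender moves $[\deltax_{t+1}]_i=[\deltay_t]_i+[\deltay_t]_{i-1}$ out of each node~$i$. This is feasible because $[\x_{t+1}]_i\ge[\y_t]_i+[\y_t]_{i-1}\ge[\deltay_t]_i+[\deltay_t]_{i-1}$, and a short telescoping computation then shows $[\x_{t+2}]_i\ge[\y_{t+1}]_i+[\y_{t+1}]_{i-1}$, closing the induction. Note this is a \emph{reactive} strategy depending on $\deltay_t$; your attempt to pick $\x_{t+2}$ obliviously (as the identity) cannot work. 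Once you replace the identity step with this explicit flow rule, the separate ``routing/feasibility'' argument you sketch (partial sums, circulation constants) becomes unnecessary: feasibility of $\deltax_{t+1}$ is immediate from the guarding inequality, and the inductive invariant is re-established directly.
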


\begin{proof}
Necessity is straightforward based on the construction of $\polyreq(\y_0)$ and Lemma~\ref{lem:xreq}.
We show the sufficiency in the following.

Let us label the nodes of the cycle graph so that $[A]_{i,i-1}=1$: i.e., there is an outgoing edge from $i-1$ to $i$ (with the convention $i-1=N$ for $i=1$). Suppose we have $\x_t$ and $\y_t$ that satisfy $\SSet(\x_t,\y_t)\neq \emptyset$.
As a special case of Lemma~\ref{lem:guarding_condition} with $\neighbor_i^-=\{i-1,i\}$, any strategy $\x_{t+1}\in\SSet(\x_t,\y_t)$ satisfies
\begin{equation}
    [\x_{t+1}]_i \geq [\y_t]_i + [\y_t]_{i-1}.
    \label{eq:ring_x}
\end{equation}

Let $[\deltay_t]_i$ denote the amount of resource that moves from $i$ to $i+1$ at $t$ based on the action $F_t$.
Note that $[\deltay_t]_i$ is in the interval $0\leq [\deltay_t]_i \leq [\y_t]_i$.
Then we have the following:
\begin{equation}
    [\y_{t+1}]_i = [\y_{t}]_i - [\deltay_{t}]_i + [\deltay_{t}]_{i-1},
\end{equation}
which gives the necessary and sufficient condition for a Defender's action, $[\x_{t+2}]_i$, to be \emph{safe} as follows:
\begin{eqnarray}\label{eq:condition_x_t2}
    [\x_{t+2}]_i &\geq& [\y_{t+1}]_i + [\y_{t+1}]_{i-1} \notag \\
    &=& [\y_{t}]_i - [\deltay_{t}]_i + [\deltay_{t}]_{i-1} \notag \\
    &&+ [\y_{t}]_{i-1} - [\deltay_{t}]_{i-1} + [\deltay_{t}]_{i-2} \notag \\
    &=& \left( [\y_{t}]_i + [\y_{t}]_{i-1}\right) - ([\deltay_{t}]_i+[\deltay_{t}]_{i-1}) \notag\\
    &&+ ([\deltay_{t}]_{i-1}+[\deltay_{t}]_{i-2}).
\end{eqnarray}

Now, consider the following candidate strategy for \defender{}, $\deltax_{t+1}$, described by the amount of resource that moves out of each node $i$:
\begin{equation}
    [\deltax_{t+1}]_i = [\deltay_{t}]_i+[\deltay_{t}]_{i-1}.
\end{equation}
Note that this action is feasible since \eqref{eq:ring_x} implies $[\x_{t+1}]_i \geq [\deltay_{t}]_i+[\deltay_{t}]_{i-1}$: i.e., node $i$ can ``afford'' this much resource to be sent to $i+1$.
With this action, we have
\begin{eqnarray}
    [\x_{t+2}]_i &=& [\x_{t+1}]_i - [\deltax_{t+1}]_i + [\deltax_{t+1}]_{i-1} \notag \\
    &\geq& \left( [\y_{t}]_i + [\y_{t}]_{i-1}\right)- [\deltax_{t+1}]_i + [\deltax_{t+1}]_{i-1}, \quad
\end{eqnarray}
which reduces to the necessary and sufficient condition \eqref{eq:condition_x_t2}.
From Lemma~\ref{lem:guarding_condition}, we have $\x_{t+2}\in\polyreq(\y_{t+1})$. 
Also, the feasibility of the above action gives us $\x_{t+2}\in\RSet(\x_{t+1})$.
Hence, our candidate strategy provides the condition $\SSet(\x_{t+1},\y_{t+1})\neq\emptyset$.
We have shown that (C1b) is satisfied, which completes the sufficiency proof.
\end{proof}

The result of Theorem~\ref{thm:cycle_graph} shows that for cycle graphs, \attacker{} can win if and only if it can win immediately at $t=1$.
If \defender{} survives one time step, then \attacker{} can never win.
This makes it easy for us to find the winning region for \defender{}, $\region_D$, and the winning region for \attacker{}, $\region_A$: i.e., for cycle graphs we have
\begin{eqnarray}
    \region_D &=& \{[\x,\y]\;|\;\SSet(\x,\y)\neq\emptyset\},\\
    \region_A &=& \{[\x,\y]\;|\;\SSet(\x,\y)=\emptyset\}.
\end{eqnarray}
Note that the winning regions are not simple for a general graph since \attacker{} may be able to win after multiple time steps.
Strategies and winning regions for more complex graphs are subjects of ongoing work.

\subsection{Attacker Winning Scenario}
To analyze the winning condition for \attacker{} over multiple time steps, we define the notion of state-to-state (S-S) time.
\begin{definition}
Given two configurations $\x_s$ and $\x_g$, the state-to-state (S-S) time for \defender{}, $\tau^X(\x_s,\x_g)$, is defined as the minimum time steps required to achieve $\x_g$, starting from $\x_s$.
The S-S time for \attacker{}, $\tau^Y$, is defined similarly.
\end{definition}

Using this notion, we provide another sufficient condition for \attacker{} to win the game.

\begin{itemize}\setlength{\itemindent}{1em}
    \item[(\textbf{C2})] There exists a pair $(\y_s,\y_g)$ such that for all $\x_{s}\in\polyreq(\y_s)$ and $\x_{g}\in\polyreq(\y_g)$, we have
    \begin{equation}
        \tau^X(\x_s,\x_g) > \tau^Y(\y_s,\y_g).
    \end{equation}
\end{itemize}
If the condition (C2) is true, and if \attacker{} is able to reach $\y_s$ in finite time, then it can transition from $\y_s$ to $\y_g$ in minimum time to ensure its victory.


%

To formalize the idea, we extend the definition of the reachable set $\RSet(\x_t)$ of a single point $\x_t$ to the reachable set of a polytope, based on which we propagate the reachable set over multiple time steps. 
We first consider the set of all possible states that can be reached in the next time step from a polytope $D_t \in \Delta_X$:
\begin{equation}\label{eqn:rechability-set}
    \RSet(D_t) = \left\{ \x =  K\x_t  \; \vert\; K\in \K \text{ and } \x_t \in D_t  \right\}.
\end{equation}

The set $\RSet(D_t)$ can be constructed as a union of the reachable sets of all points $\x_t \in D_t$ via
$\RSet(D_t) = \bigcup_{\x_t \in D_t} \RSet(\x_t)$.
Suppose that the vertices of the polytope $D_t$ are $\{\x_{D_t}^{(s)}\}_s$, then any point, $\x_{t+1}$, in $\RSet(D_t)$ can be expressed as:
\begin{eqnarray}
    \label{eqn:characterization-reachable-D}
    \x_{t+1} &=&  \left(\sum_{\ell}\lambda^{(\ell)}\hatK^{(\ell)}\right)  \left(\sum_s \theta^{(s)} \x_{D_t}^{(s)} \right)\\
    &=& \sum_{\ell, s} \left(\theta^{(s)} \lambda^{(\ell)}\right) \left(\hatK^{(\ell)}\x_{D_t}^{(s)}\right),    
\end{eqnarray}
where $\{\theta^{(s)}\}_s$ and $\{\lambda^{(\ell)}\}_{\ell}$ are convex coefficients. 

\begin{fact}
The set $\RSet(D_t)$ is the convex hull of the vertices of $D_t$ propagated with extreme actions: 
\begin{equation}
\label{eqn:multi-step-RSet}
    \RSet(D_t)= \convexhull{\left\{\hatK^{(\ell)}\x_{D_t}^{(s)}\right\}_{\ell,s}}.
\end{equation}
\end{fact}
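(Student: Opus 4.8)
The plan is to prove the Fact by double inclusion, using \eqref{eqn:characterization-reachable-D} as the workhorse for one direction and isolating the convexity of $\RSet(D_t)$ as the single new ingredient needed for the other. Throughout I would write $\vx^{(\ell,s)} \triangleq \hatK^{(\ell)}\x_{D_t}^{(s)}$ for the propagated vertices appearing in \eqref{eqn:multi-step-RSet}.

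For the inclusion $\RSet(D_t)\subseteq\convexhull{\{\hatK^{(\ell)}\x_{D_t}^{(s)}\}_{\ell,s}}$, essentially nothing beyond \eqref{eqn:characterization-reachable-D} is required. There an arbitrary $\x_{t+1}\in\RSet(D_t)$ is written as $\sum_{\ell,s}\big(\theta^{(s)}\lambda^{(\ell)}\big)\vx^{(\ell,s)}$ with $\{\theta^{(s)}\}_s$ and $\{\lambda^{(\ell)}\}_\ell$ convex; I would simply observe that the coefficients $\theta^{(s)}\lambda^{(\ell)}$ are nonnegative and sum to $\big(\sum_s\theta^{(s)}\big)\big(\sum_\ell\lambda^{(\ell)}\big)=1$, so $\x_{t+1}$ is a genuine convex combination of the $\vx^{(\ell,s)}$ and hence lies in their convex hull.

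For the reverse inclusion I would use two facts. First, each propagated vertex $\vx^{(\ell,s)}=\hatK^{(\ell)}\x_{D_t}^{(s)}$ lies in $\RSet(D_t)$, because $\hatK^{(\ell)}\in\hatcalK\subseteq\K$ (Theorem~\ref{thm:extreme-action}) and $\x_{D_t}^{(s)}$, being a vertex of $D_t$, lies in $D_t$. Second, $\RSet(D_t)$ is convex. Granting these, $\RSet(D_t)$ contains the finite set $\{\vx^{(\ell,s)}\}_{\ell,s}$ and, by convexity, its entire convex hull, which is exactly what \eqref{eqn:multi-step-RSet} asserts. The only nontrivial point — and the step I expect to be the main obstacle — is convexity of $\RSet(D_t)$, which is \emph{not} automatic: the dynamics $(K,\x)\mapsto K\x$ are bilinear, so $\RSet(D_t)$ is a priori only the image of the convex set $\K\times D_t$ under a bilinear, not linear, map.

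To establish convexity I would pass to the transportation/flow reading of an admissible action, the same device used for the $\deltay_t$ variables in the proof of Theorem~\ref{thm:cycle_graph}: $\y\in\RSet(\x)$ holds exactly when there is a nonnegative edge flow $\{f_{ji}\}_{(j,i)\in\edgeset}$ with $\sum_i f_{ji}=[\x]_j$ for every $j$ and $[\y]_i=\sum_j f_{ji}$ for every $i$. Hence $\RSet(D_t)$ is the image, under the \emph{linear} inflow map $f\mapsto\y$, of the polytope $\{(\x,f)\;|\;\x\in D_t,\ f\geq 0,\ \sum_i f_{ji}=[\x]_j\ \forall j\}$; a linear image of a polytope is a polytope, in particular convex. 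Equivalently and more directly: given $\y^1=K_1\x^1$ and $\y^2=K_2\x^2$ with $\x^1,\x^2\in D_t$, realized by flows $f^1,f^2$, the averaged flow $f=\alpha f^1+(1-\alpha)f^2$ is admissible out of $\x\triangleq\alpha\x^1+(1-\alpha)\x^2\in D_t$ and realizes $\alpha\y^1+(1-\alpha)\y^2$, so the whole segment stays in $\RSet(D_t)$. Finally, the identical argument with $\hatcalF$ in place of $\hatcalK$ yields the analogous statement for the Attacker, which underlies the multi-step reachability analysis used to verify condition (C2).
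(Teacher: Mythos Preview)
Your proof is correct. The paper does not actually supply a proof of this Fact: it merely states \eqref{eqn:characterization-reachable-D} as a lead-in and then asserts \eqref{eqn:multi-step-RSet}. Equation~\eqref{eqn:characterization-reachable-D} by itself only yields the inclusion $\RSet(D_t)\subseteq\convexhull{\{\hatK^{(\ell)}\x_{D_t}^{(s)}\}_{\ell,s}}$, exactly as you use it, since the product coefficients $\theta^{(s)}\lambda^{(\ell)}$ are convex. The paper leaves the reverse inclusion implicit.

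You correctly isolate the nontrivial point the paper glosses over: an arbitrary convex combination $\sum_{\ell,s}\mu_{\ell,s}\vx^{(\ell,s)}$ need not have product-form coefficients, so one cannot simply invert \eqref{eqn:characterization-reachable-D}; instead one needs $\RSet(D_t)$ itself to be convex. Your flow argument handles this cleanly --- passing from the bilinear variables $(K,\x)$ to the linear flow variables $f_{ji}=[K]_{ij}[\x]_j$ turns $\RSet(D_t)$ into a linear projection of a polytope, hence convex (the only minor detail to mention is that when $[\x]_j=0$ one fills the $j$th column of $K$ with any admissible column, which does not affect $K\x$). So your argument is both correct and strictly more complete than what appears in the paper.
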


With the above extension, the mapping $\RSet(\cdot)$ can be cascaded $\tau$ times to obtain the reachability set after $\tau$ time steps.
The following algorithm shows how the S-S can be computed for \defender{}. 
(The one for \attacker{} can be computed similarly.)
{In the sequel, we use $D_t$ to denote the states that can be reached at time $t$.}

\begin{algorithm}[h]
\SetAlgoLined
\SetKwInput{KwInputs}{Inputs}
\KwInputs{Traversability $A$; start and goal states $\x_s,\x_g$;}
Compute the extreme action sets $\hatcalK$ via~\eqref{eqn:extreme-actions}\;
Set $D_0 \leftarrow \x_s$, and $\tau \leftarrow 0$\;
 \While{$\x_g \notin D_\tau$}{
  Propagate:
  $D_{\tau+1} \leftarrow \RSet(D_\tau)$ via \eqref{eqn:multi-step-RSet}\;
  $\tau\leftarrow \tau+1$\;
 }
 \textbf{Return} $\tau$
 \caption{Finding S-S Time}
 \label{alg:S-S-time}
\end{algorithm}
\vspace{-10pt}

Algorithm~\ref{alg:attacker-winning-graph} shows a sampling-based approach\footnote{The samples can be generated randomly or through meshing $\Delta_Y$.}
to find a pair of start and goal configurations $(\y_s,\y_g)$ that satisfies (C2).
Note that $\RSet^\tau(\cdot)$ denotes the cascade of propagation $\RSet(\cdot)$ applied $\tau$ times: e.g., $\RSet^2(\cdot)=\RSet(\RSet(\cdot))$.

\begin{algorithm}[h]
\SetAlgoLined
\SetKwInput{KwInputs}{Inputs}
\KwInputs{Traversability $A$; available resources $X$, $Y$\;}
Compute the extreme action sets $\hatcalK, \hatcalF$ via~\eqref{eqn:extreme-actions}\;
Sample a set of \attacker{} configuration pairs $\mc{Y}_{c}$  \;
 \For{$(\y_i,\y_j) \in \mc{Y}_c$}{
  $\tau\gets\tau^Y(\y_i,\y_j)$ using Algorithm~1\;
  Find initial required set: $D_1\gets \polyreq(\y_i) $\;
  Propagate safe set: $D_{\tau+1} \leftarrow \RSet^\tau (D_1)$\;

  \If{$D_{\tau+1} \cap \polyreq(\y_j) = \emptyset$}{
  \textbf{Return:} $(\y_s,\y_g) \leftarrow (\y_i,\y_j)$
  }}
 \caption{Attacker Winning Graph}
 \label{alg:attacker-winning-graph}
\end{algorithm}
\vspace{-10pt}

If Algorithm~\ref{alg:attacker-winning-graph} returns a solution,
then \attacker{} has an open-loop strategy to win the game regardless of \defender{}'s strategy.
Specifically, if \attacker{} can achieve $\y_s$ in finite time steps,\footnote{This is true for any initial state, $\y_0$, as long as $\graph$ is strongly connected and has at least one self-loop.} then from there it can reach $\y_g$ in $\tau^Y(\y_s,\y_g)$ time steps.
Let $t=0$ denote the time at which \attacker{} reaches the state $\y_s$.
For \defender{} to survive, it is necessary to have $\x_1\in\polyreq(\y_s)=D_1$ and $\x_{\tau+1}\in\polyreq(\y_g)$.\footnote{Note that this is necessary, but not a sufficient condition since \defender{} also must guard all the time steps in between: $1<t<\tau+1$.}
However, if $D_{\tau+1}\cap \polyreq(\y_g)=\emptyset$, then \defender{} does not have a strategy to satisfy $\x_{\tau+1}\in\polyreq(\y_g)$, and thus \attacker{} has a guarantee to win.

The result of Algorithm~\ref{alg:attacker-winning-graph} is a {characterization} of the graph $\graph$, and is completely independent of the initial conditions.
Even if a pair that satisfies condition (C2) does not exist, which Algorithm~\ref{alg:attacker-winning-graph} checks, \attacker{} may still win if the game starts in a favorable initial configuration.

To take the initial states into account, we present Algorithm~\ref{alg:open-loop-attacker} as a complement to Algorithm~\ref{alg:attacker-winning-graph}.
Given the initial states $(\x_0,\y_0)$, Algorithm~\ref{alg:open-loop-attacker} provides a sampling-based approach to find an open-loop strategy for the Attacker to win the game within $t_\text{max}$ time steps.

\begin{algorithm}[h]
\SetAlgoLined
\SetKwInput{KwInputs}{Inputs}
\KwInputs{Traversability $A$; initial states $\x_0, \y_0$\;}
Compute the extreme action sets $\hatcalK, \hatcalF$ via~\eqref{eqn:extreme-actions}\;
Initialize $D_1^X \leftarrow \RSet(\x_0)$ and $D_0^Y \leftarrow \{\y_0\}$\;
Initialize $t \leftarrow 1$\;
 \While{$t < t_\text{max}$}{

  Sample a set of goal points, $\mc{Y}_{c}$, from $D^Y_{t-1}$\;
  
  \For{$\y_{c} \in \mc{Y}_{c}$}{
  \If{$D^X_t \cap \polyreq(\y_{c}) = \emptyset$}{
  \textbf{Return:} $\y_g \leftarrow \y_{c}$
  }}
  Defender's reachable set: $D^X_{t+1} \leftarrow \RSet(D^X_t)$\;
  Attacker's reachable set: $D^Y_t \leftarrow \RSet(D^Y_{t-1})$\;
  
  $t \leftarrow t+1$\;
 }
 \caption{Open-loop Attacker Strategy}
 \label{alg:open-loop-attacker}
\end{algorithm}
\vspace{-10pt}

\begin{theorem}[Open-loop Attacker Strategy]
If Algorithm~\ref{alg:open-loop-attacker} returns a solution $\y_g$, then regardless of \defender{}'s strategy, \attacker{} has an open-loop strategy to win the game: i.e., reach a state with guaranteed win. 
\end{theorem}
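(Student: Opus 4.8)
The plan is to decode what a successful termination of Algorithm~\ref{alg:open-loop-attacker} certifies, turn it into an explicit open-loop plan for \attacker{}, and then finish the game with the one-step termination argument of Lemma~\ref{lem:xreq}. Suppose the algorithm returns $\y_g$ in the iteration where the counter equals $t$. Tracking the updates together with the initialization $D_1^X\leftarrow\RSet(\x_0)$, $D_0^Y\leftarrow\{\y_0\}$, an induction on the identity $\RSet(D)=\bigcup_{\x\in D}\RSet(\x)$ shows that at that moment $D_{t-1}^Y$ is \emph{exactly} the set of \attacker{} states reachable from $\y_0$ in $t-1$ steps and $D_t^X$ is \emph{exactly} the set of \defender{} states reachable from $\x_0$ in $t$ steps; the return guard additionally gives $\y_g\in D_{t-1}^Y$ and $D_t^X\cap\polyreq(\y_g)=\emptyset$. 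Peeling the reachability recursion one layer at a time — each layer supplying, by the definition of the reachable set, an admissible transition matrix carrying a state of $D_{k-1}^Y$ to a state of $D_k^Y$ — produces states $\y_0,\dots,\y_{t-1}=\y_g$ and matrices $F_0,\dots,F_{t-2}\in\F$ with $\y_{k}=F_{k-1}\y_{k-1}$; crucially this plan does not reference \defender{}'s actions.

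I then let \attacker{} execute $F_0,\dots,F_{t-2}$, so that it occupies $\y_g$ at time step $t-1$ regardless of \defender{}'s play. For any admissible \defender{} trajectory, its resource vector at time $t$ lies in $D_t^X$, hence $\x_t\notin\polyreq(\y_g)$. This is exactly the situation of Lemma~\ref{lem:xreq} with current \attacker{} state $\y_g$: there is a node $j$ with $[\x_t]_j<[\xreq_t(\y_g)]_j$, and by the graph-theoretic form of the required resource, $[\xreq_t(\y_g)]_j=\sum_{k\in\inneighbor_j}[\y_g]_k$ is realized by the extreme \attacker{} action that routes every in-neighbor of $j$ fully into $j$. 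Since \defender{} commits to $\x_t$ before \attacker{} acts at time $t$, \attacker{} plays that extreme action from $\y_g$ and attains $[\y_t]_j=[\xreq_t(\y_g)]_j>[\x_t]_j$, which satisfies the terminal condition~\eqref{eq:terminal_condition}. Hence \attacker{} breaches at time $t<t_\text{max}$ (or even earlier, if \defender{} ever concedes a breach before then), so the open-loop trajectory to $\y_g$ indeed drives the game to a state with guaranteed \attacker{} win, as claimed.

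The only genuinely delicate point is the first paragraph. I have to verify that the sets $D_t^X$ and $D_{t-1}^Y$ maintained by the algorithm coincide with the true multi-step reachable sets: the inclusion ``true reachable $\subseteq D_t^X$'' is what forbids \defender{} from slipping into $\polyreq(\y_g)$, whereas ``$D_{t-1}^Y\subseteq$ true reachable'' is what makes the plan to $\y_g$ realizable; and I must keep the off-by-one straight, since the \attacker{} set lags the \defender{} set by one step. Everything after that is the direct application of Lemma~\ref{lem:xreq}, mirroring the reasoning already sketched for Algorithm~\ref{alg:attacker-winning-graph}. I expect this bookkeeping — not the game-theoretic content — to be the main obstacle.
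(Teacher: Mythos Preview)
Your proposal is correct and follows essentially the same route as the paper: identify $D_{t-1}^Y$ and $D_t^X$ as the exact $(t-1)$- and $t$-step reachable sets, use $\y_g\in D_{t-1}^Y$ to build an open-loop trajectory to $\y_g$, and then use $D_t^X\cap\polyreq(\y_g)=\emptyset$ together with the one-step breach argument of Lemma~\ref{lem:xreq} to finish. The paper phrases the intermediate step through condition~(C2) and S-S times rather than reconstructing the breach directly, but the content and the off-by-one bookkeeping you highlight are identical.
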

\begin{proof}
Suppose the algorithm terminates at $t=\tau+1$.
By construction, $D_{\tau+1}^X$ contains all points that \defender{} can reach within $\tau$ time steps from any point $\x_1\in D^X_1=\RSet(\x_0)$. 
The condition $D^X_{\tau+1} \cap \polyreq(\y_{g}) = \emptyset$ in line 7 implies that for all $\x_1\in\RSet(\x_0)$ and $\x_g \in \polyreq(\y_{g})$, we have
\begin{equation}
    \tau^X(\x_1,\x_g) > \tau \geq \tau^Y(\y_0,\y_g),
\end{equation}
which is equivalent to (C2) with an additional constraint that $\x_s \in \RSet(\x_0)$.
By construction, \attacker{} can achieve $\y_g$ in $\tau \geq \tau^Y(\y_0,\y_g)$ time steps, and it will win in the next time step (after observing \defender{}'s action at $\tau+1$).\footnote{The actual strategy (which may be non-unique) can be found by a linear-program based strategy extraction algorithm, which we will present in future publications due to space limit.}
Note that the above argument accommodates the degenerate case where $\y_g=\y_0$ and $\tau=0$.
\end{proof}

\begin{remark}[Discrete state space]
Although the reachability analysis assumes $\x$ and $\y$ to take continuous values, {one may restrict Algorithms~\ref{alg:attacker-winning-graph} and \ref{alg:open-loop-attacker} to sample integer-valued resource set $\mc{Y}_c$ to accommodate the discrete state space cases.}
\end{remark}

\section{Illustrative Examples}\label{sec:examples}
This section provides examples that illustrate the theoretical results presented in the previous sections.


\subsection{Attacker Winning Graph}
We first demonstrate the idea of S-S time and Algorithm~\ref{alg:attacker-winning-graph}.
Figure~\ref{fig:ex1_attacker} shows a graph with $\doutmax=3$;
every node has a self-loop, but is omitted for clarity.
Suppose $Y=1$ and $X=3$. 
This amount of resources ensures that the game satisfies~\eqref{eq:population_nontrivial_game} and is non-degenerate.
For this graph and resources, Algorithm~\ref{alg:attacker-winning-graph} finds a solution and outputs $\y_s = [1,0,0,0,0]^\top$ and $\y_g = [0,1,0,0,0]^\top$ as depicted by the red robots in Fig.~\ref{fig:ex1_attacker}. 
The S-S time is $\tau^Y(\y_s,\y_g)=1$ for \attacker{}.

Since the equality in \eqref{eq:population_nontrivial_game} holds, the safe set is a single point: $\x_s=\polyreq(\y_s)$ and $\x_g=\polyreq(\y_g)$.\footnote{We selected such a critical case for the simplicity in the visualization. The same pair $(\y_s,\y_g)$ guarantees \attacker{}'s win in this graph as long as $X<4$.}
The states $\x_s$ and $\x_g$ are shown as blue robots in Fig.~\ref{fig:ex1_attacker}(a) and (b), respectively.
Starting from (a), the required allocation in (b) cannot be achieved by \defender{} within 1 time step, since the resource at node $5$ cannot contribute to any of the required resource at nodes $2, 3$ or $4$ in the immediate next time step.
In fact, Algorithm~\ref{alg:S-S-time} will show that $\tau^X(\x_s,\x_g)=2$ for \defender{}.

This example provides the case where we have a two-step Attacker strategy that guarantees its win.
Specifically, the Attacker first moves all of its resource to node 1 and then move to node 2.

\begin{figure}[t]
\vspace{+5pt}
    \centering
    \includegraphics[width=0.42\textwidth]{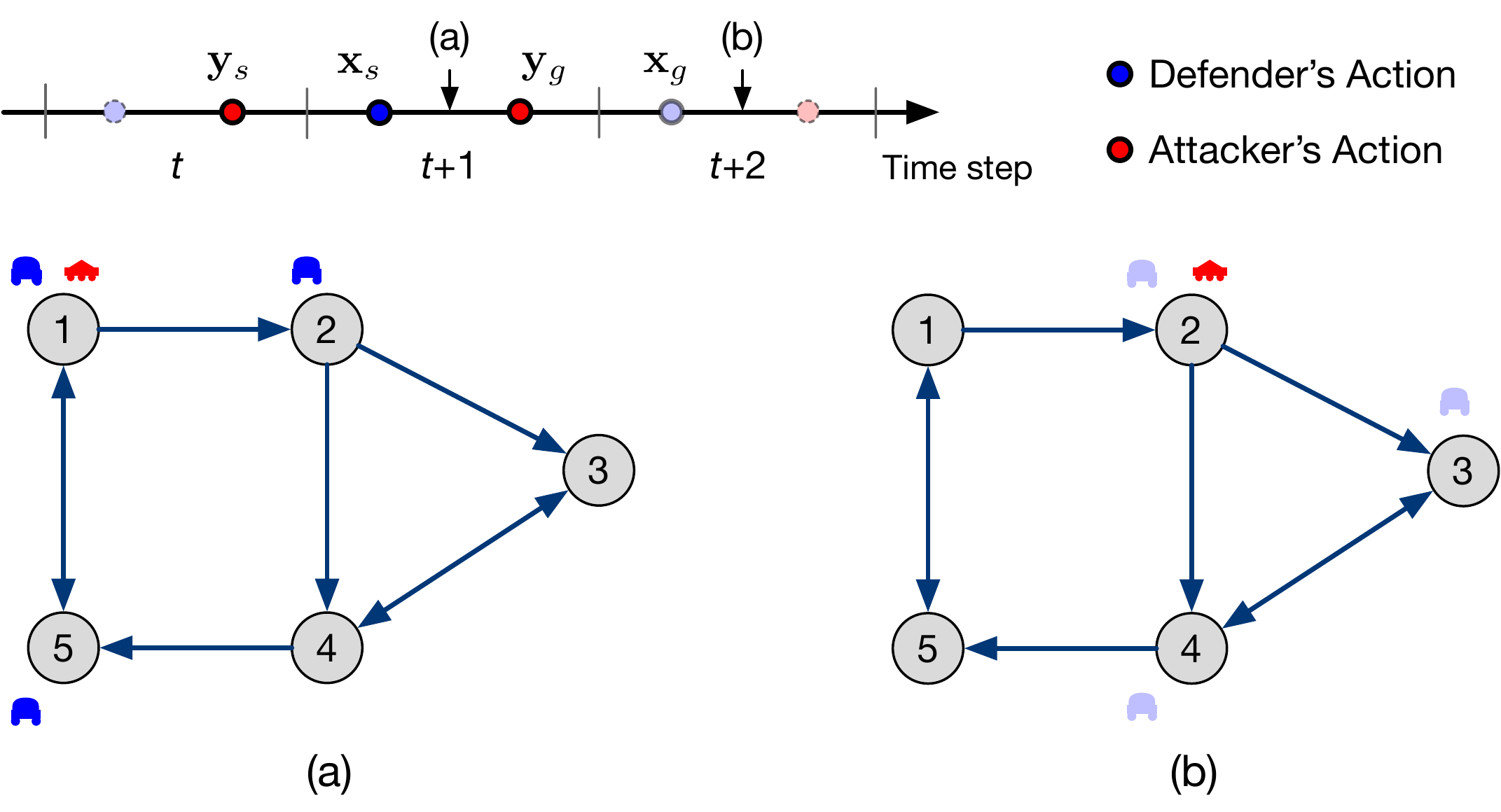}
    \vspace{-10pt}
    \caption{
    An example of \attacker{}-winning graph for $X=3Y$. (Self-loops are omitted for clarity.)
    Starting from the configuration in (a), \attacker{} can move all its resource to node 2 as shown in (b). The blue configuration {in (a)} indicates the necessary and sufficient resources {for \defender{}} to guarantee guarding; the one in (b) is a hypothetical state (shown in light blue) that {is required for guarding at the next time step. However, this hypothetical state} cannot be achieved in one time step from (a).
    }
    \label{fig:ex1_attacker}
\end{figure}

%

\subsection{Attacker Winning Initial Condition}
Finally, we demonstrate the use of Algorithm~\ref{alg:open-loop-attacker}.
Consider the graph in Fig.~\ref{fig:attacker_winning_initial_condition} with total resources $Y= 1$ and $X= 7$.
This is a non-degenerate scenario.
Furthermore, there is no pair $(\y_s,\y_g)$ that qualifies as the solution to Algorithm~\ref{alg:attacker-winning-graph}, which is obvious from the fact that \defender{} can cover each node with one unit of resource.
This setup indicates that the graph, together with the given total resources, does not trivially lead to \attacker{}'s victory.
Therefore, \attacker{} must rely on a favorable initial condition in order to win.
Suppose the initial condition is the one given in Fig.~\ref{fig:attacker_winning_initial_condition}(a), i.e., $\x_0=[4,1,1,1,0,0,0]^\top$ and $\y_0=[0,0,0,1,0,0,0]^\top$.
Given the graph and this initial condition, Algorithm~\ref{alg:open-loop-attacker} identifies a solution $\y_g = [0,0,0,0,1,0,0]^\top$. 
As shown in Fig.~\ref{fig:attacker_winning_initial_condition}, there is no admissible Defender strategy to allocate one unit of resource on both nodes $6$ and $7$ within two time steps from the initial state.
Relating to the reachable sets, the two-step reachable set $\RSet^2(\x_0)$ does not contain any state that has at least one resource on both nodes $6$ and $7$.
Consequently, after observing which of the two nodes \defender{} allocated resource to at time step $1$, \attacker{} can reallocate its resource to the node with less than one Defender resource to terminate the game as shown in Fig.~\ref{fig:attacker_winning_initial_condition}(d). 
\begin{figure}[h]
    \centering
    \includegraphics[width=0.43\textwidth]{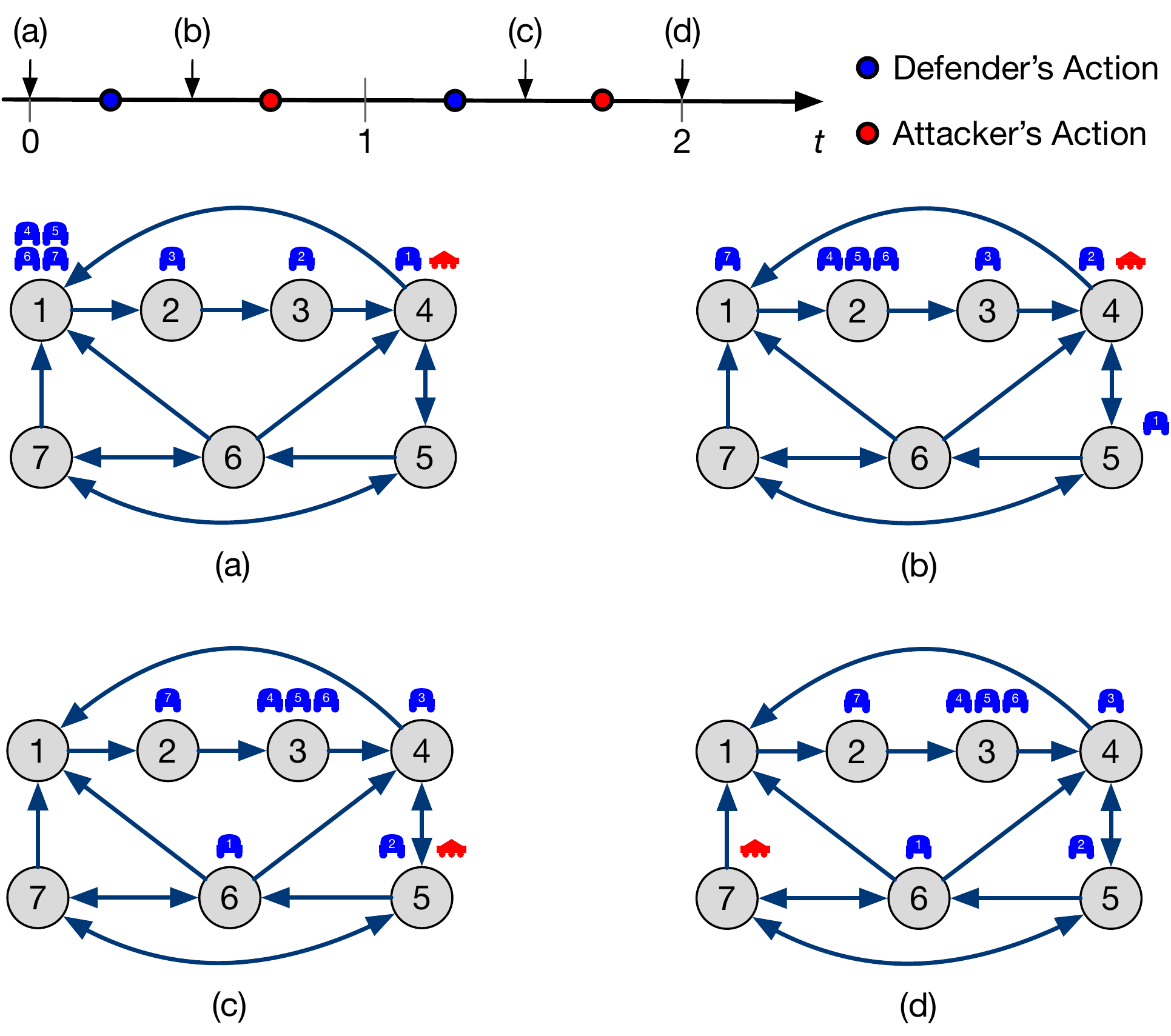}
    \vspace{-5pt}
    \caption{An example of Attacker winning initial condition. All nodes have self-loops. 
    The Defender can defend at time step $1$ as shown in (b).
    When selecting its action $K_1$ at time $1$, \defender{} must decide between relocating the resource at node $5$ in (b) to either node $6$ or node $7$.
    We use the scenario of reallocating to node $6$ in (c).
    After observing this move $K_1$, \attacker{} can then select its action $F_1$ and allocate its resource to node $7$ and terminate the game at time step $2$, as depicted in (d).}
    \label{fig:attacker_winning_initial_condition}
    \vspace{-5pt}
\end{figure}

\section{Conclusion} \label{sec:conclusion}
In this work, we extended the Colonel Blotto game to a dynamic setting, where the locations are modeled as nodes in a graph. 
Instead of achieving a desired allocation instantly, we require the resources of each player to traverse through the edges of the graph. 
Based on the structure in the dynamics, we presented an efficient reachable-set approach to predict the evolution of the game. 
The sufficient winning conditions for the Attacker and the Defender are presented, and we designed algorithms that verify these conditions through the propagation of reachable sets. 
Finally, we demonstrated the efficacy of the proposed approach via some illustrative examples. 
%
Our ongoing work investigates tighter conditions for players' win for broader class of graphs.

\paragraph*{Future work}
There are several directions for future extensions.
Firstly, the assumption of sequential actions can be modified to simultaneous actions, which may lead to mixed strategies.
Secondly, the engagement rules and payoffs can be modified to accommodate stage-based payoffs, changes in the amount of resources, and other terminal conditions.
Finally, the information structure may be relaxed to consider decentralized version of the game where decisions are made at the node level or at the agent level.

\section*{Acknowledgement}
The authors thank Lifeng Zhou, Austin Chen, Panagiotis Tsiotras, and Christopher Kroninger for useful discussions.

\bibliographystyle{IEEEtran}
\bibliography{refs}

\end{document}